\documentclass[letterpaper, 10 pt, journal]{IEEEtran}

\IEEEoverridecommandlockouts

\usepackage{amsmath,amssymb,amsfonts}
\usepackage{amsthm}
\usepackage{bbm}
\usepackage{algorithmic}  
\usepackage{graphicx}
\usepackage{textcomp} 
\usepackage{xcolor}
\usepackage{amsmath,amssymb,amsfonts}
\usepackage{algorithmic} 
\usepackage{graphicx}
\usepackage{textcomp}
\usepackage{xcolor}
\usepackage{algorithmic}
\usepackage{algorithm}
\usepackage[nopar]{lipsum}
\allowdisplaybreaks
\newcommand\numberthis{\addtocounter{equation}{1}\tag{\theequation}}

\theoremstyle{plain}
\newtheorem{thm}{Theorem}
\newtheorem{lem}[thm]{Lemma}
\newtheorem{prop}[thm]{Proposition}

\newtheorem{rem}[thm]{Remark}

\theoremstyle{remark}

\newtheorem{remark}{Remark}
 
\theoremstyle{plain}

\theoremstyle{definition}

\theoremstyle{remark}

\DeclareMathOperator*{\argmax}{arg\,max}

\title{\LARGE \bf Learning to Cache and Caching to Learn: Regret Analysis of Caching Algorithms
}

\author{Archana Bura, Desik Rengarajan, Dileep Kalathil, Srinivas Shakkottai, and Jean-Francois Chamberland-Tremblay
\thanks{Authors are with the Department of Electrical and Computer Engineering,  Texas A\&M University, Texas, USA. Email of the corresponding author:  
        {\tt\small  dileep.kalathil@tamu.edu}}%
}

\begin{document}

\maketitle

\begin{abstract}
Crucial performance metrics of a caching algorithm include its ability to quickly and accurately learn a popularity distribution of requests.  However, a majority of work on analytical performance analysis focuses on hit probability after an asymptotically large time has elapsed.  We consider an online learning viewpoint, and characterize the ``regret'' in terms of the finite time difference between the hits achieved by a candidate caching algorithm with respect to a genie-aided scheme that places the most popular items in the cache.   We first consider the Full Observation regime wherein all requests are seen by the cache.  We show that the Least Frequently Used (LFU) algorithm is able to achieve order optimal regret, which is matched by an efficient counting algorithm design that we call LFU-Lite.  We then consider the Partial Observation regime wherein only requests for items currently cached are seen by the cache, making it similar to an online learning problem related to the multi-armed bandit problem.  We show how approaching this ``caching bandit'' using traditional approaches yields either high complexity or regret, but a simple algorithm design that exploits the structure of the distribution can ensure order optimal regret.  We conclude by illustrating our insights using numerical simulations.
\end{abstract}

\section{Introduction}
\label{section: introduction}
Caching is a fundamental aspect of content distribution.  Since it is often the case that the same content item is requested by multiple clients over some timescale, replicating and storing content in near proximity to the requesting clients over that timescale can both reduce latency at the clients, as well as enable more efficient usage of network and server resources.   Indeed, this is the motivation for a  variety of cache eviction policies such as Least Recently Used (LRU), First In First Out (FIFO), RANDOM, CLIMB \cite{coffman73operating}, Least Frequently Used (LFU) \cite{einziger2017tinylfu} etc., all of which attempt to answer a basic question: suppose that you are aware of the timescale of change of popularity, what are the right content items to store?

Viewed from this angle, the problem of caching is simply that there is some underlying unknown popularity distribution (that could change with time) over a library of content items, and the goal of a caching algorithm is to quickly learn which items are most popular and place them in a location that minimizes client latencies. Taking this viewpoint of ``caching equals fast online learning of an unknown probability distribution,''  it is clear that it is not sufficient for a caching algorithm to learn a fixed popularity distribution accurately, \emph{it must also learn it quickly} in order to track the changes on popularity that might happen frequently. 

Most work on the performance analysis of caching algorithms has focused on the stationary (long term) hit probabilities under a fixed request distribution.  However, such an approach does not account for the fact that request distributions change with time, and finite time performance is a crucial metric.  Suppose that all content items are of the same size, a cache can hold $C$ content items, and the request process consists of independent draws (called the Independent Reference Model (IRM)).  Then a genie-aided algorithm that is aware of the underlying popularity distribution would place the top $C$ most popular items in the cache to maximize the hit probability.  Yet, any pragmatic caching algorithm needs to learn the popularity distribution as requests arrive, and determine what to cache.  The \emph{regret} suffered is the difference in the number of cache hits between the two algorithms.  How does the regret scale with the number of requests seen? 

\begin{figure}
\centering
\includegraphics[width=0.9\linewidth]{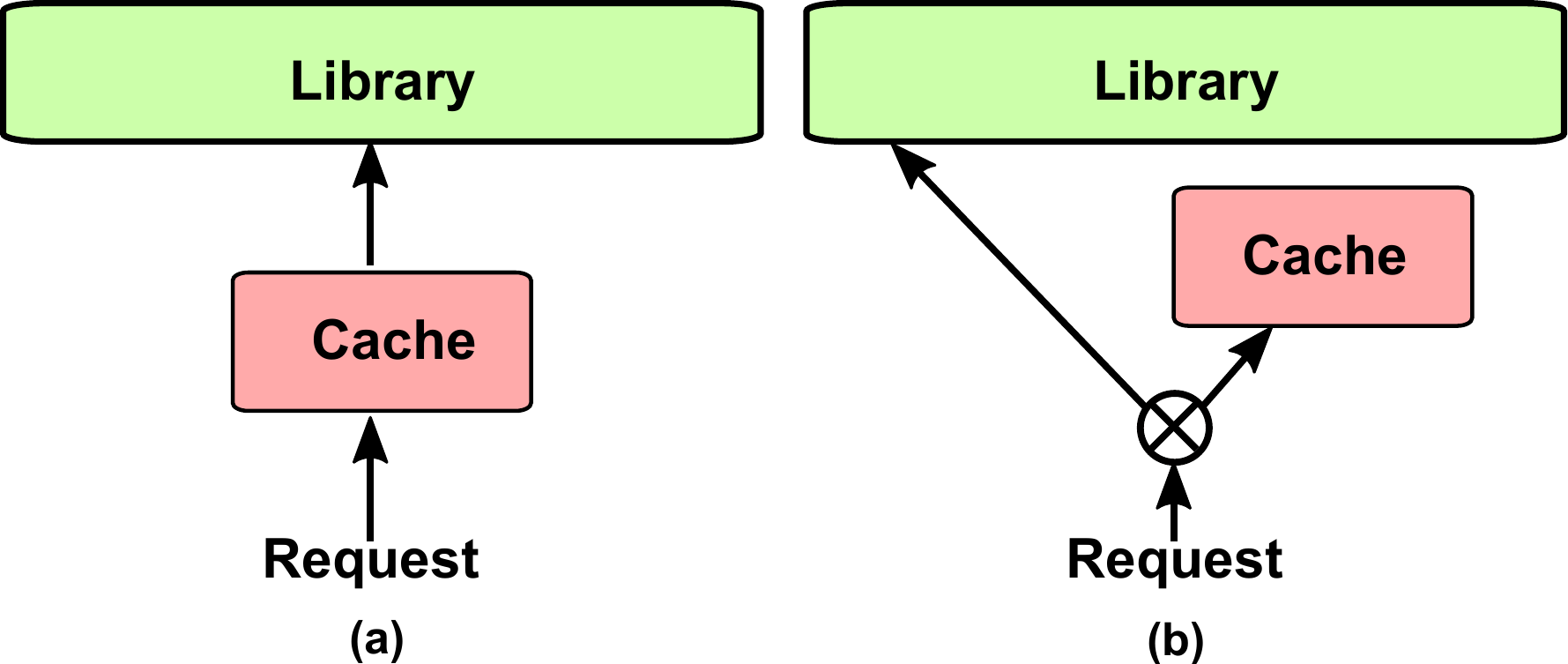}
\caption{{Request forwarding with (a) Full observation, and (b) Partial observation at the cache.}}
\label{fig:full-partial}
\vspace{-0.2in}
\end{figure}
The goal of this work is to conduct a systematic analysis of caching from the perspective of regret, with idea that a low-regret algorithm implies fast and accurate learning in finite time, and hence should be usable in a setting where the popularity distribution changes with time.  We consider two different caching paradigms, illustrated in Figure~\ref{fig:full-partial}. In both paradigms, there is a single cache of size $C$ and a much larger library of items of size $L.$  In the first setup that we call the \emph{full observation} model, all requests are first directed towards the cache, and then forwarded to the library if there is a cache miss.  Here, the cache sees all requests and can learn the underlying popularity distribution with full observation of the requests.  The second model is more akin to a data center with multiple nodes, wherein requests may be forwarded via cache routing towards a node that has the item.  We call this case as the \emph{partial observation} model, since the cache only sees requests for items currently cached in it, while the other requests are directly forwarded to the library. Thus, items must be cached in an exploratory manner in order to learn their popularities, and whether they are worth retaining.  Can we design regret optimal algorithms that apply to each of these cases?

\subsection*{Main Results}

In our analytical model, we consider a system in which one request arrives at each discrete time unit, i.e., the total number of requests is the same as the elapsed time $T$.  We begin with insight that, under the full observation regime, the empirical frequency is a sufficient statistic of all information on the popularity distribution received thus far.  Furthermore, the appropriate use of this estimate is to choose the top $C$ most frequent items to cache. This approach is identical to the LFU, since it evicts the item with the least empirical frequency at each time.   Our first result is to show that LFU has an $O(1)$ regret, not only with respect to time $T$, but also with respect to library size.

While LFU is known to attain high hit rates, it suffers from the fact that the number of counters is the same as the library size, since every request must be counted.  This is clearly prohibitive, and has given rise to approximations such as W-LFU \cite{karakostas2002exploitation}, which only keeps counts within a moving window of requests, and TinyLFU \cite{einziger2017tinylfu} that uses a sketch for approximate counting.  Our next result is to show that these approximations never entirely eliminate the error in estimating the popularity distribution, leading to the worst possible regret of $\Omega (T).$

We then propose a varient of LFU that we term LFU-Lite, under which we use a moving window of requests to decide whether or not a particular item appears to be popular enough to be counted accurately.  Thus, we maintain a counter bank, and only count those content items that meet a threshold frequency in any window of requests thus far.  The counter bank size grows in a concave manner with time, and we find its expected size to ensure $O(1)$ regret for a target time $T.$  Thus, given a time constant of change in popularity, we can decide on the ideal number of counters.

We next consider the partial observation regime, wherein the cache can only see requests of items currently cached in it.  We relate this problem to that of the classical multi-armed bandit under which actions must be taken to learn the value of pulling the different arms.  We first consider an algorithm that builds up the correct posterior probabilities given the requests seen thus far, and caches the most frequent items in a sample of this posterior distribution.  Although its empirical performance is excellent, maintaining the full posterior sampling (FPS) quickly becomes prohibitively difficult.

We then consider an algorithm that simply conducts a marginal posterior sampling (MPS) by updating counts only for the items that are in the cache.  Here, counts of hits and misses are awarded to the appropriate cached item, but a miss (which manifests itself as no request being made to the cache) is not used to update the posterior distribution of items not in the cache.  Clearly, we are not using the information effectively, and this is reflected in the regret scaling as $O(\log T).$  This result is similar to earlier work~\cite{blasco2014learning}.

We then ask whether we can exploit the structure of the problem to do better?  In particular, suppose that we know that requests will follow a certain probability distribution (e.g., Zipf), although we do not know the ranking of items (i.e., we do not know which one is the most popular etc.).  We develop a Structured Information (SI) algorithm that considers this information about the distribution to reduce the regret to $O(1)$.  We also describe a ``Lite'' version of the SI algorithm similar to LFU-Lite to reduce the number of counters.

We first verify our analytical results via numerical simulations conducted using an IRM model drawn from Zipf distributions with different parameters of library and cache sizes.  We also find that Lite-type schemes appear to empirically  perform even better than predicted by the analytical results.  

We then construct versions of the algorithms that are capable of following a changing popularity distribution by simply "forgetting" counts, which takes the form of periodically halving the counts in the counters.  The expectation is that a low regret algorithm, augmented with such a forgetting rule with an appropriately chosen periodicity should be able to track a moving popularity distribution accurately.  We conduct trace-based simulations using (non-stationary) data sets obtained from IBM and YouTube, and compare hit performance against the ubiquitous LRU algorithm.  We show that the LFU variants outperform LRU, and that incorporating forgetting enhances their hit-rates.  

Since the amount of change over time in the existing traces is low, we stress test our algorithms by creating a synthetic trace that has higher changes in popularity over time.  Again, we show that the versions of our algorithms that incorporate forgetting are able to track such changing distributions, and are still able to outperform LRU, which builds a case for their eventual adoption.

\subsection*{Related Work}

Existing analytical studies of caching algorithms largely follow the IRM model, with the focus being on closed form results of the stationary hit probabilities of LRU, FIFO, RANDOM, and CLIMB \cite{king71, coffman73operating, gelenbe73unified,starobinski01}.  The expressions are often hard to compute for large caches, and  approximations have been proposed for larger cache sizes \cite{rosensweig2010}.  Of particular interest is the  Time-To-Live (TTL) approximation \cite{fagin77,che02,berger14exact,gast16asym} that associates each cached item with a lifetime after which it is evicted.  Appropriate choice of this lifetime enables the accurate approximation of different caching schemes \cite{berger14exact}.  

Recent work on performance analysis of caching algorithms has focused on the online learning aspect.  For instance, \cite{basu2018adaptive} propose TTL-based schemes to show that a desired hit rate can be achieved under non-stationary arrivals.  Other work such as \cite{li2018accurate} characterize the mixing times of several simple caching schemes such as LRU, CLIMB, k-LRU etc. with the goal of identifying their learning errors as a function of time.  However, the algorithms studied all have stationary error (they never learn perfectly) and so regret in our context would be $\Omega(T)$.

Information Centric caching has gained much recent interest, and is particularly relevant to edge wireless networks.  Joint caching and routing is studied in \cite{ioannidis2018jointly} where the objective is to show asymptotic accuracy of the placements, rather than finite time performance that we focus on.   Closest to our ideas on the partial observation model is work such as \cite{blasco2014learning}, which draws a parallel between bandit algorithms and caching under this setting.  However, the algorithms considered are in the manner of the traditional Multi-Armed Bandit (MAB) approach that does not account for problem structure, and hence can only attain $O(\log T)$ regret.

{With regard to the MAB problem, Lai and Robbins \cite{lr} showed in seminal work the $\Omega(\log T)$ regret lower bound pertaining to any online learning algorithm.  An index based algorithm using the upper confidence idea (UCB1 algorithm) was proposed in \cite{auer2002finite}, which enabled a simple implementation while achieving the optimal regret.  The the posterior sampling approach, first proposed by Thompson in \cite{thompson1933likelihood}, has recently been shown to attain optimal regret \cite{agrawal2013further}. For a detailed survey, we point to a monograph \cite{bubeck2012regret} and a recent book \cite{cesa2012combinatorial}.}   


Much work also exists on the empirical performance evaluation of caching algorithms using traces gathered from different applications.  While several discover fundamental insights \cite{martina2014,zink08,megiddo03arc}, our goal in this work is on analytical performance guarantees, and we do not provide a comprehensive review.

\section{System Model}
\label{section: System Model} 
We consider the optimal cache content placement problem in a communication network.  The  library, which is the set of all files, is denoted by $\mathcal{L} = \{1, \ldots, L\}$.   We assume for expositional simplicity that all files are of the same size, and that the cache has a capacity of $C$, i.e., it can store $C$ files at a given time. We denote the popularity of the files by the profile $\mu = (\mu_{1}, \mu_{2}, \ldots, \mu_{L}),$ with  $\sum_{i} \mu_{1} =1$.  Without loss of generality, we assume that $\mu_{1} > \mu_{2} > \cdots > \mu_{L}$.  Let $x(t) \in \mathcal{L}$ be the file request received at time $t$.  We assume that requests are generated independently according to the popularity profile $\mu$, i.e., $\mathbb{P}(x(t) = i) = \mu_{i}.$ 

Let $C(t)$ be set of files placed in the cache by the caching algorithm at time $t$. We say that cache gets a \textit{hit} if $x(t) \in C(t)$ and a \textit{miss} if $x(t) \notin C(t)$. The goal of the caching algorithm is to maximize the expected  cumulative hits over time, $\mathbb{E}[\sum^{T}_{t=1} \mathbbm{1}\{x(t) \in C(t)\} ],$ where the expectation is over all the random choices on $C(t)$ made by the caching algorithm, and  the randomness over the requests $x(t)$.   Clearly, if popularity distribution $\mu$ is known, the optimal caching policy is to place the most popular items in the cache all the time, i.e., $C^{*}(t) = \mathcal{C}$, where $\mathcal{C} = \{1, 2, \ldots, C\}$. However, in most real world applications, the popularity distribution is unknown to the caching algorithm apriori.   So the goal of a caching algorithm is to learn the popularity distribution from the sequential observations, and to place files in the cache  by optimally using this  available observation at each time in order to maximize the expected cumulative hits.

In the multi-armed bandits literature, it is common to characterize the performance of an online learning algorithm using the metric of \textit{regret}.  Regret is defined as the performance loss of the  algorithm as compared to the optimal algorithm with complete information.  Since $C^{*}(t) = \mathcal{C}$, the cumulative regret of a caching algorithm after $T$ time steps is defined as
\begin{align}
R(T) = \sum^{T}_{t=1} \mathbbm{1}\{x(t) \in \mathcal{C}\} - \mathbbm{1}\{x(t) \in C(t)\}. 
\end{align}

Let $s(t)$ be the observation available to the caching algorithm at time $t$ and let  $h(t) = (s(1), \ldots, s(t-1))$ be the history of observations until time $t$.  The \textit{optimal caching problem} is defined as the problem of finding a policy $\pi(\cdot)$ that maps $h(t)$ to $C(t)$, i.e., $C(t) = \pi(h(t))$, in order to minimize  the   expected cumulative regret, $\mathbb{E}[R(T)]$.     

The choice of the caching policy will clearly depend on the nature of the sequential observations available to it. We consider two different observation structures that are  most common in communication networks.  
\begin{enumerate}
\item \textbf{Full Observation:}   In the full observation structure, we assume that the caching algorithm is able to observe the file request at each time, i.e., $s(t) = x(t)$.   In the setup of a cache and library, this regime corresponds to all requests being sent to the cache, which can then forward the request to the library in case of a miss.

\item \textbf{Partial Observation:} In the partial observation structure, the caching algorithm can observe the request only in the case of a hit, i.e., only if the requested item is in the cache already. More precisely, we define $s(t) = x(t) \mathbbm{1}\{x(t) \in C(t)\}$ under this observation structure. In the case of a miss, $s(t) = 0.$  In the setup of a cache and library, this regime corresponds to the context of information centric caching, wherein requests are forwarded to the cache only if the corresponding content is cached.
\end{enumerate}

We propose  different caching algorithms to address the optimal caching problem under these two observation structures.  
\section{Caching with Full Observation}

We first consider the full observation structure where the caching algorithm can observe every file request.  
Our focus is on a class of algorithms following Least Frequently Used (LFU) eviction, since it uses cumulative statistics of all received requests (unlike other popular algorithms such as Least Recently Used (LRU)), and so is likely to have low regret.


\subsection{LFU Algorithm}

At each time $t$, the LFU algorithm selects the top $C$ requested files until time $t$ and  places them in the cache.  More precisely, LFU algorithm maintains an empirical estimate of the popularity distribution, denoted by $\hat{\mu}(t) = (\hat{\mu}_{1}(t), \ldots, \hat{\mu}_{L}(t))$,  defined as 
\begin{align*}
\hat{\mu}_{i}(t) = \frac{1}{t} \sum^{t}_{\tau=1} \mathbbm{1}\{x(\tau) = i\}, \quad \forall i \in \mathcal{L}. 
\end{align*}
The files to be placed in cache at time $t+1$, $C_{\text{LFU}}(t+1)$, is then selected as
\begin{align*}
C_{\text{LFU}}(t+1) = \arg \max_{C} ~ (\hat{\mu}_{1}(t), \ldots, \hat{\mu}_{L}(t))
\end{align*}
where $\arg \max_{C}$ indicates the indices of the top $C$ elements of the vector $\hat{\mu}(t)$.

We now present the finite time performance guarantee for the LFU algorithm.
\begin{thm} 
\label{thm1:regretLFU}
The LFU algorithm has an expected regret of $O(1)$. More precisely,
\begin{align*}
\mathbb{E}[R(T)] <  \min \left( \frac{16}{\Delta^2_{\min}},   \frac{4 C (L-C)}{\Delta_{\min}} \right)
\end{align*} 
where $\Delta_{\min} =  \mu_{C} - \mu_{C+1}$.
\end{thm}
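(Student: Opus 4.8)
The plan is to reduce the expected regret to a sum of per-step ``popularity gaps'' and then control each gap by concentration of the empirical frequencies. Since the request $x(t)$ is independent of the cache $C(t)$ (the latter depends only on the history $h(t)$), the expected instantaneous regret equals $\mathbb{E}[r(t)]$ where $r(t) = \sum_{i\in\mathcal{C}}\mu_i - \sum_{i\in C(t)}\mu_i \ge 0$, so that $\mathbb{E}[R(T)] = \sum_{t=1}^T \mathbb{E}[r(t)]$. It therefore suffices to bound $\sum_t \mathbb{E}[r(t)]$ in two different ways and take the smaller of the two, which produces the minimum in the statement.

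First I would establish a structural bound on $r(t)$. Because $C_{\text{LFU}}(t)$ consists of the top $C$ files ranked by $\hat\mu(t-1)$, every file in the cache has empirical frequency at least that of every file outside it. Writing $A = \{i\le C: i\notin C(t)\}$ and $B=\{j>C: j\in C(t)\}$, we have $|A|=|B|$ and, for any pairing $a_m\leftrightarrow b_m$ of $A$ with $B$, both $\hat\mu_{b_m}(t-1)\ge\hat\mu_{a_m}(t-1)$ and $\mu_{a_m}-\mu_{b_m}\ge\mu_C-\mu_{C+1}=\Delta_{\min}>0$. Since $r(t)=\sum_m(\mu_{a_m}-\mu_{b_m})$ is a sum of distinct nonnegative pair-terms, each of which also appears on the right-hand side below, this yields the key inequality $r(t)\le \sum_{i\le C}\sum_{j>C}(\mu_i-\mu_j)\,\mathbbm{1}\{\hat\mu_j(t-1)\ge\hat\mu_i(t-1)\}$.

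Next comes the concentration step. For $i\le C<j$ the difference $\hat\mu_i(s)-\hat\mu_j(s)$ is an average of $s$ i.i.d.\ terms in $[-1,1]$ with mean $\mu_i-\mu_j\ge\Delta_{\min}$, so Hoeffding's inequality gives $\mathbb{P}(\hat\mu_j(s)\ge\hat\mu_i(s))\le \exp(-s(\mu_i-\mu_j)^2/2)$. Taking expectations in the structural bound and summing the resulting geometric series over $t$ gives, for each pair, a contribution $(\mu_i-\mu_j)\big/\big(1-e^{-(\mu_i-\mu_j)^2/2}\big)$; using $1-e^{-x}\ge x-x^2/2$ together with $\mu_i-\mu_j\le 1$, this is at most $4/\Delta_{\min}$. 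Summing over the $C(L-C)$ pairs produces the bound $\mathbb{E}[R(T)]< 4C(L-C)/\Delta_{\min}$.

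The second, library-independent bound $16/\Delta_{\min}^2$ requires a coarser aggregation. Here I would introduce the margin threshold $\theta=(\mu_C+\mu_{C+1})/2$, noting $\mu_C-\theta=\theta-\mu_{C+1}=\Delta_{\min}/2$, and observe that a misplaced pair forces either $\hat\mu_{a_m}(t-1)\le\theta$ or $\hat\mu_{b_m}(t-1)\ge\theta$, each a deviation of at least $\Delta_{\min}/2$ whose probability decays like $e^{-c\,t\Delta_{\min}^2}$; summing the geometric series over $t$ contributes the $1/\Delta_{\min}^2$ scaling, and tracking constants yields $16$. The main obstacle is precisely that a naive union bound over the up to $L$ offending files reintroduces a factor of $L$, which would destroy the desired library-free estimate. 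Eliminating this factor is the crux of the argument, and I expect to need the normalization $\sum_i\mu_i=1$ to limit the number of files whose popularity lies within $\Delta_{\min}/2$ of the boundary, so that the weighted sum of deviation probabilities is controlled independently of $L$. Combining the two estimates with a minimum then gives the stated result.
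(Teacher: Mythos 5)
Your first bound is essentially correct and matches the paper's own argument: the reduction of expected regret to per-step popularity gaps, the pairing argument giving $r(t)\le\sum_{i\le C}\sum_{j>C}(\mu_i-\mu_j)\,\mathbbm{1}\{\hat\mu_j\ge\hat\mu_i\}$, and the pairwise Hoeffding-plus-geometric-series computation are exactly how the paper obtains $4C(L-C)/\Delta_{\min}$. (The paper splits the event $\{\hat\mu_j\ge\hat\mu_i\}$ into two one-sided deviations of size $\Delta_{i,j}/2$ and applies Hoeffding to each, rather than applying Hoeffding directly to the difference as you do; this is a cosmetic difference and your version even gives a slightly better constant.)

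The genuine gap is the library-free bound $16/\Delta_{\min}^2$: you have correctly located the crux (avoiding any union bound over the $L$ files) but you do not have the idea that closes it, and the route you sketch will not work. The paper's missing ingredient is the Dvoretzky--Kiefer--Wolfowitz inequality \cite{dkw}: applying DKW to the empirical CDF $\hat F_i(t)=\sum_{j\le i}\hat\mu_j(t)$ gives $\mathbb{P}(\max_i|\hat F_i(t)-F_i|>\epsilon/2)\le 2e^{-t\epsilon^2/2}$, and since $|\hat\mu_i-\mu_i|\le|\hat F_i-F_i|+|\hat F_{i-1}-F_{i-1}|$, this yields the uniform, dimension-free bound $\mathbb{P}(\max_i|\hat\mu_i(t)-\mu_i|>\epsilon)\le 2e^{-t\epsilon^2/2}$ with no dependence on $L$ whatsoever. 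Taking $\epsilon=\Delta_{\min}/2$ shows $\mathbb{P}(C(t)\neq\mathcal{C})\le 2e^{-t\Delta_{\min}^2/8}$, and summing over $t$ gives $16/\Delta_{\min}^2$; no per-file union bound ever appears, and even the factor $C$ is absorbed. Your proposed substitute --- using $\sum_i\mu_i=1$ to limit the number of files whose popularity is near the threshold $\theta$ --- does not deliver this. The hard instance is $L-C$ tail files all with popularity just below $\mu_{C+1}\approx 1/(L-C)$: normalization is satisfied, yet every one of these files needs only a deviation of $\Delta_{\min}/2$ to cross $\theta$, so the sum of deviation probabilities is about $(L-C)e^{-t\Delta_{\min}^2/2}$, and even after truncating each per-step probability at one, summing over $t$ yields order $\ln(L-C)/\Delta_{\min}^2$ rather than a constant. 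Negative association of the multinomial counts does not help either, since the union bound requires no independence to begin with. The dimension-free control comes from the cumulative (CDF) structure that DKW exploits, not from the normalization constraint.
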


\begin{remark}
We note that  both terms of the regret upper bound are distribution dependent, i.e., they depend on $\Delta_{\min}$s. Roughly, if $LC < 1/\Delta_{\min}$, then the second term dominates. 
\end{remark}

%
%

We will use the following Lemma for proving Theorem \ref{thm1:regretLFU}.   
\begin{lem}
\begin{align*}
\mathbb{P}(\max_{i} |\hat{\mu}_{i}(t) - \mu_{i}| >\epsilon)  \leq 2e^{-t\epsilon^2/2}
\end{align*} 
\label{lem3:dwf}  
\end{lem}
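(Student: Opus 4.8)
The plan is to avoid the naive union bound over the $L$ items---which would introduce an unwanted factor of $L$---and instead route the argument through the empirical cumulative distribution function, where the Dvoretzky--Kiefer--Wolfowitz (DKW) inequality furnishes an $L$-free concentration bound. The first step is to observe that $t\,\hat{\mu}_i(t) = \sum_{\tau=1}^{t} \mathbbm{1}\{x(\tau) = i\}$ is a sum of $t$ i.i.d. Bernoulli$(\mu_i)$ random variables, so that the vector $(\hat{\mu}_1(t),\ldots,\hat{\mu}_L(t))$ is precisely the empirical distribution of $t$ i.i.d. draws from $\mu$ over the ordered support $\{1,\ldots,L\}$.

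Next I would introduce the true and empirical CDFs $F(i) = \sum_{j\le i}\mu_j$ and $\hat{F}(i) = \sum_{j\le i}\hat{\mu}_j(t)$, with the convention $F(0)=\hat{F}(0)=0$. The key deterministic step is that the sup-norm distance between the PMFs is controlled by twice the sup-norm (Kolmogorov) distance between the CDFs: writing $\hat{\mu}_i(t) - \mu_i = (\hat{F}(i)-F(i)) - (\hat{F}(i-1)-F(i-1))$ and applying the triangle inequality gives
\[
\max_{i}|\hat{\mu}_i(t) - \mu_i| \;\le\; 2\,\max_{i}|\hat{F}(i) - F(i)|.
\]

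The final step is to apply the DKW inequality (in Massart's form, with constant $2$) to the empirical CDF of $t$ i.i.d. samples. Since both CDFs are step functions jumping only at the integers, the supremum of $|\hat{F} - F|$ over the real line coincides with $\max_i|\hat{F}(i)-F(i)|$, so that $\mathbb{P}(\max_i|\hat{F}(i)-F(i)| > \delta) \le 2e^{-2t\delta^2}$ for every $\delta>0$. Combining this with the deterministic inequality and setting $\delta = \epsilon/2$ yields
\[
\mathbb{P}\!\left(\max_i|\hat{\mu}_i(t)-\mu_i| > \epsilon\right) \le \mathbb{P}\!\left(\max_i|\hat{F}(i)-F(i)| > \tfrac{\epsilon}{2}\right) \le 2e^{-2t(\epsilon/2)^2} = 2e^{-t\epsilon^2/2},
\]
which is exactly the claimed bound.

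The one place to be careful---and the reason a direct Hoeffding-plus-union-bound attack fails to reproduce the stated inequality---is the complete absence of any dependence on the library size $L$. A per-coordinate Hoeffding bound followed by a union bound would only deliver something like $2L\,e^{-2t\epsilon^2}$, whereas the factor-$2$ reduction to the Kolmogorov distance combined with DKW's $L$-free guarantee is precisely what produces the clean constant $2e^{-t\epsilon^2/2}$. I therefore expect the main conceptual obstacle to be recognizing that the maximum deviation of the empirical \emph{frequencies} should be analyzed as a maximum deviation of the empirical \emph{distribution function}, rather than coordinate by coordinate.
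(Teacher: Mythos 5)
Your proof is correct and takes essentially the same route as the paper's: both reduce the sup-norm deviation of the empirical PMF to twice the Kolmogorov deviation of the empirical CDF via the identity $\hat{\mu}_i(t)-\mu_i = (\hat{F}(i)-F(i))-(\hat{F}(i-1)-F(i-1))$ and the triangle inequality, then apply the DKW inequality with $\delta=\epsilon/2$. If anything, your statement of the deterministic step as the clean inequality $\max_i|\hat{\mu}_i(t)-\mu_i| \le 2\max_i|\hat{F}(i)-F(i)|$ is tidier than the paper's chain of set relations, which expresses the same argument.
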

\begin{proof}
The proof follows from the Dvoretzky-Kiefer-Wolfowitz inequality \cite{dkw}.

Let $F_{i} = \sum^{i}_{j=1} \mu_{j}$ and $\hat{F}_{i}(t) = \sum^{i}_{j=1} \hat{\mu}_{j}(t)$ for $i \in \mathcal{L}$. Define $F_{i} = \hat{F}_{i}(t) = 0$  for $i < 0$ and  $F_{i} = \hat{F}_{i}(t) = 1$ for $i > L$.   {Then, from the Dvoretzky-Kiefer-Wolfowitz inequality \cite{dkw},} for any $\epsilon>0$, we have
\begin{align*}
\mathbb{P}(\max_{i} |\hat{F}_{i}(t) - F_{i} | >\epsilon) \leq 2e^{-2t\epsilon^2}
\end{align*} 
\begin{align*}
\text{Now,}~ &\{\max_{i} |\hat{F}_{i}(t) -F_{i}|>\epsilon/2\}  \\
&= \{\max_{i} |\hat{F}_{i}(t) -F_{i}| + \max_{i} |\hat{F}_{i-1}(t) -F_{i-1}| > \epsilon\}  \\
& \supset \{ \max_{i} | (\hat{F}_{i}(t) -F_{i}) - (\hat{F}_{i-1}(t) -F_{i-1})|   >  \epsilon\} \\
& =  \{\max_{i}  |\hat{\mu}_i(t) -\mu_i| > \epsilon\}  \\
\text{So,}~&\mathbb{P}(\max_{i}  |\hat{\mu}_i(t) -\mu_i| > \epsilon) \leq \mathbb{P}(\max_{i} |\hat{F}_{i}(t) -F_{i}|>\epsilon/2) \\
&\leq 2e^{-t\epsilon^2/2} 
\end{align*}
\end{proof}

We now present the proof of Theorem \ref{thm1:regretLFU}.
\begin{proof}
We denote $C_{\text{LFU}}(t)$ just as $C(t)$ for notational convenience. We first show that  if  $\max_{i} |\hat{\mu}_{i}(t) - \mu_{i} |  < \Delta_{\min}/2$, then $C(t) = \mathcal{C}$. Indeed,  if  $\max_{i} |\hat{\mu}_{i}(t) - \mu_{i} |  < \Delta_{\min}/2$, for any $j \in \mathcal{C}$ and for any $k \in \mathcal{L} \setminus \mathcal{C}$,
\begin{align*}
&\hat{\mu}_{j}(t) \geq  \mu_{j} - \Delta_{\min} / 2 \geq \mu_{C} - \Delta_{\min}/2 \\
&\hspace{1cm} \geq \mu_{C+1} + \Delta_{\min}/2 \geq  \mu_{k} + \Delta_{\min}/2 \geq \hat{\mu}_{k}(t)
\end{align*} 
and hence $C(t) = \mathcal{C}$. Now, 
\begin{align}
&\mathbb{E}[R(T)] = \mathbb{E}[ \sum^{T}_{t=1} \mathbbm{1}\{x(t) \in \mathcal{C}\} - \mathbbm{1}\{x(t) \in C(t)\}]  \nonumber \\
&\leq \mathbb{E}[  \sum^{T}_{t=1} \mathbbm{1}\{C(t) \neq \mathcal{C}\} ] =\sum^{T}_{t=1} \mathbb{P}(C(t) \neq \mathcal{C}) \nonumber \\
&\leq  \sum^{T}_{t=1} \mathbb{P}(\max_{i} |\hat{\mu}_{i}(t) - \mu_{i} |  \geq  \Delta_{\min}/2) \nonumber \\
\label{eq:lfu-r1}
&\leq  \sum^{T}_{t=1} 2 e^{- t \Delta_{\min}^2/8} \leq \frac{16}{\Delta_{\min}^{2}}.  
\end{align} 
We next upper bound $\mathbb{E}[R(T)]$ in a different manner.
\begin{align}
&\mathbb{E}[R(T)] = \mathbb{E}[ \sum^{T}_{t=1} \mathbbm{1}\{x(t) \in \mathcal{C}\} - \mathbbm{1}\{x(t) \in C(t)\}]  \nonumber \\ 
&= \mathbb{E}[ \sum^{T}_{t=1} \mathbb{E}[ \mathbbm{1}\{x(t) \in \mathcal{C}\} | \mathcal{C}] - \mathbb{E}[ \mathbbm{1}\{x(t) \in C(t)\} | C(t)]  \nonumber \\ 
\label{eq:lfu-rstep1}
&=  \mathbb{E} [ \sum_{t=1}^T ( \sum_{j \in \mathcal{C}} \mu_j - \sum_{k \in C(t)} \mu_k ) ]  \\  
\label{eq:lfu-rstep5}
&\leq  \mathbb{E}[ \sum^{T}_{t=1} \sum^{C}_{j=1} \sum^{L}_{k=C+1}\Delta_{j, k} \mathbbm{1}\{j \notin C(t), k \in C(t)\}]   \\ 
&\leq \mathbb{E}[  \sum^{T}_{t=1} \sum^{C}_{j=1} \sum^{L}_{k=C+1}   \Delta_{j, k} \mathbbm{1}(\hat{\mu}_{k}(t) > \hat{\mu}_{j}(t)) ] \nonumber \\ 
&\leq \mathbb{E}[  \sum^{T}_{t=1}  \sum^{C}_{j=1} \sum^{L}_{k=C+1} \Delta_{j, k} \left(\mathbbm{1}\{\hat{\mu}_{j}(t) - \mu_{j} \leq - \Delta_{j,k}/2\} \right.  \nonumber \\
\label{eq:lfu-r2} 
&\hspace{2cm} +\left. \mathbbm{1}\{\hat{\mu}_{k}(t) - \mu_{k} > \Delta_{j,k}/2\} \right) ] 
\end{align}
{Using the Hoeffding inequality \cite{hoeffding1994probability},} we obtain
\begin{align*}
&\mathbb{P}(\hat{\mu}_{j}(t) - \mu_{j} \leq - \Delta_{j,k}/2 ) \leq e^{-t\Delta^{2}_{j,k}/2}, \\
&\mathbb{P}(\hat{\mu}_{k}(t) - \mu_{k} > \Delta_{j,k}/2) \leq e^{-t\Delta^{2}_{j,k}/2}
\end{align*}
Now, continuing form \eqref{eq:lfu-r2} and by taking expectation inside the summation, we obtain
\begin{align}
\mathbb{E}[R(T)] &\leq \sum^{C}_{j=1} \sum^{L}_{k=C+1}  \sum^{T}_{t=1}  \Delta_{j, k} 2 e^{-t\Delta^{2}_{j,k}/2} \nonumber \\ 
\label{eq:lfu-r2}
&\leq  \sum^{C}_{j=1} \sum^{L}_{k=C+1}  \frac{4}{\Delta_{j,k}} \leq \frac{4 C (L-C)}{\Delta_{\min}}
\end{align} 
Combining \eqref{eq:lfu-r1} and \eqref{eq:lfu-r2}, we obtain the desired result. 
\end{proof}

\subsection{WLFU Algorithm}

We showed that the LFU achieves a regret of $O(1)$.  However, the implementation of the LFU is expensive in terms of the memory required, because it maintains the popularity estimate  for each item in the library ($\hat{\mu}_{i}(t)$), and the library size $L$ is extremely large for most practical applications. Hence,  allocating memory to maintain the popularity distribution estimate for the whole library is impractical.

There are many approaches proposed to address this issue \cite{karakostas2002exploitation,einziger2017tinylfu}. However, most approaches rely on heuristics-based approximations of the empirical estimate, often with a tight pre-determined constraint on the memory. This leads to non-optimal use of the available information, and could result in poor performance of the corresponding algorithms.

Here, we consider the Window-LFU (WLFU) algorithm \cite{karakostas2002exploitation} that has been proposed as way to overcome the expensive memory requirement of the LFU.  WLFU employs a sliding window approach. At each time $t$, the algorithm keeps track of only the past $w$ file requests.  This is equivalent to maintaining a time window from $t-w$ to $t$, denoted by $W[t-w, t]$.  Caching decisions are made based on the file requests that appeared in this window.  In  particular, the items to be placed in the cache at time $t$, $C_{WLFU}(t)$, are the top $C$ files with maximum appearances in the window $W[t-w, t]$. 

We now show that the expected cumulative regret incurred by WLFU increases linearly in time ($\Omega(T)$), as opposed to the constant regret ($O(1)$) of the standard LFU. Since  $\Omega(T)$ is the worst possible regret for any learning algorithm, it suggests that in practice there will occasionally be arbitrarily bad sample paths with many misses.

\begin{thm}
Under the WLFU algorithm, $\mathbb{E}[R(T)] = \Omega (T)$.
\label{prop1:WLFU}
\end{thm}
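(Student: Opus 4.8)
The plan is to exploit the one structural weakness of WLFU relative to LFU: its popularity estimate is built from only a fixed number $w$ of samples, so its sampling error never concentrates as $t \to \infty$. Consequently there is an irreducible, time-independent probability that the window counts mis-rank the items straddling the cache boundary, and each such mis-ranking costs a constant amount of regret. Multiplying a constant per-step error probability by a constant per-step loss over $T$ steps yields linear regret.

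First I would reuse the exact per-step regret identity \eqref{eq:lfu-rstep1}. Since WLFU's decision $C(t)$ is a function only of the past $w$ requests $x(t-w),\ldots,x(t-1)$, it is independent of the current request $x(t)$, so conditioning on $C(t)$ gives $\mathbb{E}[R(T)] = \mathbb{E}[\sum_{t=1}^{T} (\sum_{j \in \mathcal{C}} \mu_j - \sum_{k \in C(t)} \mu_k)]$. I would then lower bound the per-step regret on the event that the cache is wrong: whenever $C(t) \neq \mathcal{C}$, set $A = \mathcal{C} \setminus C(t)$ and $B = C(t) \setminus \mathcal{C}$ with $|A| = |B| \geq 1$; every index in $A$ is at most $C$ so its mass is at least $\mu_C$, and every index in $B$ is at least $C+1$ so its mass is at most $\mu_{C+1}$. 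Hence $\sum_{j \in \mathcal{C}} \mu_j - \sum_{k \in C(t)} \mu_k \geq \mu_C - \mu_{C+1} = \Delta_{\min} > 0$.

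Next, and this is the crux, I would lower bound $\mathbb{P}(C(t) \neq \mathcal{C})$ by a positive constant that does not depend on $t$. Let $N_i^w$ be the number of appearances of item $i$ in the window, so the vector $(N_1^w,\ldots,N_L^w)$ is multinomial$(w,\mu)$ for every $t$ past the initial fill-in phase. If $N_{C+1}^w > N_C^w$, then the boundary item $C+1$ strictly outranks item $C$ in the window counts, so the top-$C$ selection cannot equal $\mathcal{C}$; therefore $\{N_{C+1}^w > N_C^w\} \subseteq \{C(t) \neq \mathcal{C}\}$. The crude event in which all $w$ requests in the window are for item $C+1$ already realizes this, and it has probability $\mu_{C+1}^w > 0$, giving $\mathbb{P}(C(t) \neq \mathcal{C}) \geq \mu_{C+1}^w$ uniformly in $t$. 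Combining the three pieces, $\mathbb{E}[R(T)] \geq \sum_{t=w+1}^{T} \Delta_{\min}\,\mathbb{P}(C(t) \neq \mathcal{C}) \geq (T-w)\,\Delta_{\min}\,\mu_{C+1}^w = \Omega(T)$, since $w$, $\Delta_{\min}$, and $\mu_{C+1}$ are fixed positive constants.

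I expect the main obstacle to be the third step, namely articulating precisely why a fixed-size window cannot drive its estimation error to zero. The single-atom event used above is deliberately wasteful but gives a clean constant lower bound that already suffices for the $\Omega(T)$ claim; a sharper and more robust alternative, which I would mention only if a bound uniform in large $w$ were desired, is an anti-concentration estimate on $N_C^w - N_{C+1}^w$ (via a central-limit argument) showing the mis-ranking probability stays bounded away from zero. I would also take care to state the harmless independence of $C(t)$ and $x(t)$ and the fact that the window distribution is stationary once $t > w$, since both are needed to make the per-step bounds hold identically across all $t$.
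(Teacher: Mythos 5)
Your proposal is correct and follows essentially the same route as the paper: both start from the identity \eqref{eq:lfu-rstep1}, extract a per-step loss of $\Delta_{\min}=\mu_C-\mu_{C+1}$, and lower bound the per-step error probability uniformly in $t$ by the probability $\mu_{C+1}^{w}$ that the entire window consists of requests for item $C+1$. The only cosmetic difference is that the paper routes the argument through $\mathbb{P}(C+1\in C(t))$ rather than $\mathbb{P}(C(t)\neq\mathcal{C})$, and you are slightly more careful about the initial fill-in phase and the independence of $C(t)$ from $x(t)$, which the paper leaves implicit.
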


\begin{proof}
%
From the proof of Theorem \ref{thm1:regretLFU} (c.f. \eqref{eq:lfu-rstep1}), we have,
\begin{align} 
&\mathbb{E}[R(T)] =  \mathbb{E} [ \sum_{t=1}^T ( \sum_{j \in \mathcal{C}} \mu_j - \sum_{k \in C(t)} \mu_k ) ] \nonumber  \\
&= \mathbb{E}[\sum_{t=1}^{T}(\sum_{j \in \mathcal{C}\setminus C(t) } \mu_j - \sum_{k \in {C(t)}\setminus \mathcal{C}} \mu_k)] \nonumber \\
&\geq  \mathbb{E} [ \sum_{t=1}^{T}\sum_{k \in  C(t) \setminus \mathcal{C} }(\mu_C-\mu_k ) ] \nonumber \\ 
&= \mathbb{E}[ \sum_{t=1}^{T}\sum_{k \in  \mathcal{L} \setminus \mathcal{C} }(\mu_C-\mu_{k} ) \mathbbm{1} \{k \in C(t)\} ] \nonumber \\ 
&=\sum_{k \in \mathcal{L} \setminus \mathcal{C} }   (\mu_C-\mu_k ) \mathbb{E}[ \sum_{t=1}^{T}  \mathbbm{1} \{k \in C(t)\}] \nonumber \\
&=\sum_{k \in \mathcal{L} \setminus \mathcal{C} }   (\mu_C-\mu_{k} )  \sum_{t=1}^{T}   \mathbb{P}(k \in C(t)) \nonumber \\
\label{eq:wlfu-tep1}
&\geq (\mu_C-\mu_{C+1} )  \sum_{t=1}^{T}   \mathbb{P}(C+ 1 \in C(t))
\end{align} 

\begin{align*}
&\text{We have,}~\mathbb{P}(C+ 1 \in C(t)) \\
&\geq  \mathbb{P}(\text{all requests in}~W[t, t-w]~\text{are}~C+1 )= (\mu_{C+1})^{w}
\end{align*}

Now, continuing from \eqref{eq:wlfu-tep1},  
\begin{align*}
\mathbb{E}[R(T)] \geq (\mu_{C} - \mu_{C+1}) (\mu_{C+1})^{w} T
\end{align*}
\end{proof}

\subsection{LFU-Lite Algorithm}

We now propose a new scheme that we call the LFU-Lite algorithm.  Unlike the LFU algorithm, LFU-Lite algorithm does not maintain an estimate of the popularity  for each item in the library.  Instead, it  maintains the popularity estimate only for a subset of the items that it has observed.  This approach significantly reduces the memory required as compared to the standard LFU implementation.  At the same time, we show that the LFU-Lite achieves an $O(1)$ regret similar to that of the LFU, and thus has a superior performance compared to WLFU which suffers an $\Omega(T)$ regret.   

We achieve this `best of both' performance  by a clever combination of a window based approach to decide the items to maintain an estimate, and by maintaining a separate \textit{counter bank} to keep track of these estimates.  At each time $t$,  LFU-Lite selects the top $C$ items with maximum appearances in the window of observation $W[t-w, t]$.  We denote this set of files as $A(t)$.   Let $B(t-1)$ be the set of items in the counter bank at the beginning of $t$.  Then, if any item $j \in A(t)$ is not present in $B(t-1)$, it is added to the counter bank, and the counter bank is updated to $B(t)$.  Once an item is placed in the counter bank,  it is never removed from the counter bank. 

LFU-Lite maintains an estimate of the popularity of each item in the counter bank. The popularity estimate of item $i \in B(t)$, $\hat{\mu}_{i}(t)$, is defined as 
\begin{align}
\label{eq:muhat-ll}
\hat{\mu}_{i}(t) = \frac{1}{(t-t_{i})}  \sum^{t}_{\tau=t_{i}+1} \mathbbm{1}\{x(t) \in B(t) \}
\end{align}
where $t_{i}$ is the time at which the item $i$ has been added to the counter bank. The item to be placed in the cache at time $t$, $C_{\text{LL}}(t)$, is then selected as
\begin{align*}
C_{\text{LL}}(t) =  \arg \max_{C} (\hat{\mu}_{j}(t), j \in B(t))
\end{align*}
Description of the LFU-Lite is also given in Algorithm~\ref{alg:lfulite}.

\begin{algorithm}
\caption{LFU-Lite} 
\label{alg:lfulite}
\begin{algorithmic}
\FOR{$t=1, \dots, T$}
\STATE Observe $x(t)$ 
\STATE Select $A(t)$, the top $C$ files with maximum appearances  in the window $W[(t-w)_{+}, t]$
\FOR {Each $j \in A(t)$ } 
\IF {($j \in A(t)$ is not in $B(t-1)$)}
\STATE  $t_{j} \leftarrow t$
\STATE Add file $j$ into $B(t)$
\ENDIF
\ENDFOR
\STATE Select the files $C_{\text{LL}}(t)  = \arg \max_{C} (\hat{\mu}_{j}(t), j \in B(t))$ and place them in the cache
\ENDFOR
\end{algorithmic}
\end{algorithm}

We now present the performance guarantee for the LFU-Lite algorithm.

\begin{thm}
The expected regret under the LFU-Lite algorithm is
\begin{align*} 
\mathbb{E}[R(T)] \leq \frac{C (L-C) w}{p_{\min}}  + \frac{4 C (L-C) }{\Delta_{\min}},
\end{align*}
where $\Delta_{\min} = \mu_{C} - \mu_{C+1}$, $
p_{\min} = \sum_{n=\mu_{C+1}w+1}^{w} {w \choose n} \mu_{C}^n(1-\mu_{C})^{w-n}$. 
\end{thm}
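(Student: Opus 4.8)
The plan is to mirror the regret decomposition used for LFU in Theorem~\ref{thm1:regretLFU}, but to split the event of caching a wrong item into two disjoint causes: the good item is not yet in the counter bank, versus it is in the bank but misranked. Starting from the per-pair bound \eqref{eq:lfu-rstep5}, namely $\mathbb{E}[R(T)] \le \mathbb{E}[\sum_{t=1}^T \sum_{j=1}^C \sum_{k=C+1}^L \Delta_{j,k}\mathbbm 1\{j\notin C(t), k\in C(t)\}]$, I would write, for $j \in \mathcal{C}$ and $k \notin \mathcal{C}$,
\[
\mathbbm 1\{j\notin C(t),\, k\in C(t)\} \le \mathbbm 1\{j\notin B(t)\} + \mathbbm 1\{j,k\in B(t),\, \hat{\mu}_k(t) \ge \hat{\mu}_j(t)\},
\]
since whenever $j$ is not cached it is either absent from the bank or present but ranked below some cached $k$. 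This produces two sums to bound separately.

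For the second sum (the misranking cost), once both $j$ and $k$ are in the bank the analysis is essentially identical to the LFU proof: I would apply the same splitting $\mathbbm 1\{\hat{\mu}_k \ge \hat{\mu}_j\} \le \mathbbm 1\{\hat{\mu}_j - \mu_j \le -\Delta_{j,k}/2\} + \mathbbm 1\{\hat{\mu}_k - \mu_k > \Delta_{j,k}/2\}$ and invoke Hoeffding. The only wrinkle is that $\hat{\mu}_i(t)$ in \eqref{eq:muhat-ll} averages over the $t - t_i$ samples collected since $i$ entered the bank, so the Hoeffding exponent carries $t - t_i$ rather than $t$; conditioning on the entry time $t_i$, those samples are i.i.d.\ with mean $\mu_i$, giving $\mathbb{P}(\hat{\mu}_j(t)-\mu_j \le -\Delta_{j,k}/2 \mid t_j) \le e^{-(t-t_j)\Delta_{j,k}^2/2}$. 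Summing the geometric series over $t$ (re-indexed by $t - t_j$) reproduces the same $\sum_{j,k} 4/\Delta_{j,k} \le 4C(L-C)/\Delta_{\min}$ contribution, which is the second term of the claimed bound.

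For the first sum (the not-in-bank cost), I would bound $\Delta_{j,k} \le 1$ and drop the dependence on $k$, so that $\sum_{k=C+1}^L \Delta_{j,k}\mathbbm 1\{j \notin B(t)\} \le (L-C)\mathbbm 1\{j\notin B(t)\}$. Because an item never leaves the bank once added, $\sum_{t=1}^T \mathbbm 1\{j\notin B(t)\} \le t_j$, the entry time of $j$, so the whole sum is at most $(L-C)\sum_{j=1}^C \mathbb{E}[t_j]$. It then remains to show $\mathbb{E}[t_j] \le w/p_{\min}$ for every $j \in \mathcal{C}$, which yields the first term $C(L-C)w/p_{\min}$. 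To do this I would partition time into consecutive non-overlapping blocks of length $w$; requests in distinct blocks are independent, and within a block $j$ is added whenever it lands in the top $C$ of that block's window. A sufficient event is that $j$'s count $N_j$ in the block exceeds the threshold $\mu_{C+1}w$ while no out-of-cache item does, and since $\mu_j \ge \mu_C$ the marginal $N_j$ stochastically dominates $\mathrm{Bin}(w,\mu_C)$, giving block-success probability at least $p_{\min} = \mathbb{P}(\mathrm{Bin}(w,\mu_C) > \mu_{C+1}w)$. The entry block of $j$ is then dominated by a geometric variable with success probability $p_{\min}$, so its expectation is at most $1/p_{\min}$ and $\mathbb{E}[t_j] \le w/p_{\min}$.

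The main obstacle is precisely this last probabilistic claim: turning ``$j$ is in the top $C$ of a window'' into the clean lower bound $p_{\min}$. Membership in the top $C$ depends on the joint (multinomial) counts of all items, whereas $p_{\min}$ is stated purely through the marginal $\mathrm{Bin}(w,\mu_C)$ tail; making the reduction rigorous requires controlling the event that some out-of-cache item also crosses the threshold $\mu_{C+1}w$ and exploiting the negative association of multinomial counts (a large $N_j$ makes the competing $N_k$ smaller), so that conditioning on $N_j > \mu_{C+1}w$ does not degrade the bound. A secondary, more routine obstacle is the measurability bookkeeping in the second term, where $t_j$ is a random stopping time and one must condition on it before applying Hoeffding to the post-entry samples.
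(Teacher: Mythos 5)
Your proposal follows essentially the same route as the paper's proof: the paper merges your two cases by the device of setting $\hat{\mu}_{j}(t)=0$ for $t\le t_{j}$, so the not-in-bank cost is absorbed into the Hoeffding term $\mathbbm{1}\{\hat{\mu}_{j}(t)-\mu_{j}\le -\Delta_{j,k}/2\}$ and contributes $\Delta_{j,k}\,\mathbb{E}[t_{j}]$ per pair, after which it bounds $\mathbb{E}[t_{j}]\le w/p_{j}$ by the same independent-blocks geometric argument, arriving at exactly your two terms. The ``main obstacle'' you flag --- converting top-$C$ membership in a window (a joint multinomial event) into the marginal binomial tail $p_{\min}$ --- is not resolved in the paper either: it simply defines $p_{j}$ as the per-window selection probability and silently substitutes $p_{\min}$ for it, so your version is, if anything, more candid about that step.
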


\begin{proof}
%
For each item $i \in \mathcal{L}$, $\hat{\mu}_{i}(t)$ is defined as in \eqref{eq:muhat-ll} for $t > t_{i}$.  Here, we also define $\hat{\mu}_{i}(t) = 0$ for $t \leq t_{i}$, before item $i$ enters the counter bank.  We note that this is only a proof approach and doesn't influence the implementation of the algorithm. Now, from \eqref{eq:lfu-rstep5} 
\begin{align}
\mathbb{E}[R(T)] &\leq   \mathbb{E}[ \sum^{T}_{t=1} \sum^{C}_{j=1} \sum^{L}_{k=C+1}\Delta_{j, k} \mathbbm{1}\{j \notin C(t), k \in C(t)\}]  \nonumber \\ 
\label{eq:reg-ll-st1}
&\leq \mathbb{E}[  \sum^{T}_{t=1} \sum^{C}_{j=1} \sum^{L}_{k=C+1}   \Delta_{j, k} \mathbbm{1}(\hat{\mu}_{k}(t) > \hat{\mu}_{j}(t)) ]  \\
&\leq \mathbb{E}[  \sum^{T}_{t=1}  \sum^{C}_{j=1} \sum^{L}_{k=C+1} \Delta_{j, k} \left(\mathbbm{1}\{\hat{\mu}_{j}(t) - \mu_{j} \leq - \Delta_{j,k}/2\} \right.  \nonumber \\ 
\label{eq:reg-ll-st2}
&\hspace{2cm} +\left. \mathbbm{1}\{\hat{\mu}_{k}(t) - \mu_{k} > \Delta_{j,k}/2\} \right) ].
\end{align}   
Note that the LFU-Lite algorithm incurs a regret at time $t$ if an item $j \in \mathcal{C}$ is not present in the counter bank $B(t)$.  This is  taken into account in the above expression (c.f. \eqref{eq:reg-ll-st1})  by defining $\hat{\mu}_{j}(t) = 0$ for $j \notin B(t)$. 

The first term in  \eqref{eq:reg-ll-st2} can be bounded as
\begin{align}
&\mathbb{E}[\sum^{T}_{t=1}  \sum^{C}_{j=1} \sum^{L}_{k=C+1} \Delta_{j, k} \mathbbm{1}\{\hat{\mu}_{j}(t) - \mu_{j} \leq - \Delta_{j,k}/2\}   ] \nonumber \\
&= \mathbb{E}[ \sum^{C}_{j=1} \sum^{L}_{k=C+1} \mathbb{E}[\sum^{T}_{t=1} \Delta_{j, k} \mathbbm{1}\{\hat{\mu}_{j}(t) - \mu_{j} \leq - \Delta_{j,k}/2\} | t_{j}]] \nonumber \\
&\leq \mathbb{E}[ \sum^{C}_{j=1} \sum^{L}_{k=C+1} \Delta_{j,k}(  t_{j}  + \mathbb{E}[\sum^{T}_{t=t_{j}} \mathbbm{1}\{\hat{\mu}_{j}(t) - \mu_{j} \leq - \Delta_{j,k}/2\} | t_{j}] )] \nonumber \\
&\leq \mathbb{E}[ \sum^{C}_{j=1} \sum^{L}_{k=C+1} \Delta_{j,k}  ( t_{j}  + \sum^{T}_{t=t_{j}} \mathbb{P}(\hat{\mu}_{j}(t) - \mu_{j} \leq - \Delta_{j,k}/2 | t_{j}) )]  \nonumber \\ 
&\leq \mathbb{E}[ \sum^{C}_{j=1} \sum^{L}_{k=C+1} \Delta_{j,k} (  t_{j}  + \sum^{T}_{t=t_{j}} e^{-(t-t_{j}) \Delta^{2}_{j,k}/2} ]  \nonumber \\
\label{eq:reg-ll-st3}
&\leq  \sum^{C}_{j=1} \sum^{L}_{k=C+1}  ( \Delta_{j,k} \mathbb{E}[ t_{j} ]  +  \frac{2}{\Delta_{j,k}}).
\end{align}  
Similarly, the second term in  \eqref{eq:reg-ll-st2} can be bounded as
\begin{align}
&\mathbb{E}[\sum^{T}_{t=1}  \sum^{C}_{j=1} \sum^{L}_{k=C+1} \Delta_{j, k} \mathbbm{1}\{\hat{\mu}_{k}(t) - \mu_{k} > \Delta_{j,k}/2\}   ] \nonumber \\
&= \mathbb{E}[ \sum^{C}_{j=1} \sum^{L}_{k=C+1} \mathbb{E}[\sum^{T}_{t=1} \Delta_{j, k} \mathbbm{1}\{\hat{\mu}_{k}(t) - \mu_{k} > \Delta_{j,k}/2\} | t_{k}]] \nonumber \\
&= \mathbb{E}[ \sum^{C}_{j=1} \sum^{L}_{k=C+1} \Delta_{j,k}  \sum^{T}_{t=t_{k}} \mathbb{P}(\hat{\mu}_{k}(t) - \mu_{k} > \Delta_{j,k}/2 | t_{k} ) ] \nonumber \\
&\leq  \mathbb{E}[ \sum^{C}_{j=1} \sum^{L}_{k=C+1} \Delta_{j,k}  \sum^{T}_{t=t_{k}} e^{-(t-t_{k}) \Delta^{2}_{j,k}/2} ]  \nonumber \\ 
\label{eq:reg-ll-st4}
&\leq  \sum^{C}_{j=1} \sum^{L}_{k=C+1}   \frac{2}{\Delta_{j,k}} .
\end{align}    
Combining \eqref{eq:reg-ll-st3} and  \eqref{eq:reg-ll-st4} we obtain
\begin{align}
\label{eq:reg-ll-st6}
\mathbb{E}[R(T)] &\leq  \sum^{C}_{j=1} \sum^{L}_{k=C+1}  \Delta_{j,k} \mathbb{E}[ t_{j} ]    +  \sum^{C}_{j=1} \sum^{L}_{k=C+1}   \frac{4}{\Delta_{j,k}}
\end{align}

%

%
It only then remains to bound  $\mathbb{E}[ t_{j} ]$ for $j \in \mathcal{C},$ which can easily be shown to satisfy   
\begin{align} \label{eq:reg-ll-st5}
\mathbb{E}[ t_{j} ] \leq \sum^{\infty}_{t=1}  (1-p_{j})^{\lceil t/w \rceil} \leq  \sum^{\infty}_{k=1} w  (1-p_{j})^{k } \leq w /p_{j}. 
\end{align} 
where $p_{j}$ is the probability that item $j$ is selected in a given window.  

Combining \eqref{eq:reg-ll-st6}  and  \eqref{eq:reg-ll-st5}, we obtain
\begin{align}
\label{eq:reg-ll-st6}
\mathbb{E}[R(T)] &\leq  \sum^{C}_{j=1} \sum^{L}_{k=C+1}  \frac{w \Delta_{j,k}}{p_{j}}    +  \sum^{C}_{j=1} \sum^{L}_{k=C+1}   \frac{4}{\Delta_{j,k}} \nonumber \\
&\leq \frac{C (L-C) w}{p_{\min}}  + \frac{4 C (L-C) }{\Delta_{\min}}.
\end{align} 
\end{proof}  

\begin{prop}
The growth of the expected size of the counter bank as a function of time is concave.
\end{prop}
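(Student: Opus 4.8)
The plan is to recast the statement in terms of first-passage times. Write $N(t)=|B(t)|$ for the counter-bank size, and for each file $j\in\mathcal{L}$ let $T_j=\min\{t: j\in A(t)\}$ be the first time $j$ is selected into the window top-$C$ set; since items are never evicted from the bank, $N(t)=\sum_{j\in\mathcal{L}}\mathbbm{1}\{T_j\le t\}$, and hence $\mathbb{E}[N(t)]=\sum_{j\in\mathcal{L}}\mathbb{P}(T_j\le t)$ by linearity. For an integer-indexed sequence, concavity is equivalent to the first differences being non-increasing, so it suffices to show that the increment
\[
\mathbb{E}[N(t)]-\mathbb{E}[N(t-1)]=\sum_{j\in\mathcal{L}}\mathbb{P}(T_j=t)=\mathbb{E}\big[\,|A(t)\setminus B(t-1)|\,\big]
\]
is non-increasing in $t$; that is, that the expected number of fresh entries into the bank tapers off over time.

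The core of the argument would be a monotone-depletion (coupling) step. Two facts combine. First, for $t\ge w$ the sliding window $W[t-w,t]$ always contains exactly $w$ i.i.d.\ draws from $\mu$, so the law of $A(t)$ is identical across all such $t$; in particular $\mathbb{P}(j\in A(t))=p_j$, the constant appearing in \eqref{eq:reg-ll-st5}. Second, the banks are nested, $B(t-1)\subseteq B(t)$, so the candidate set $\mathcal{L}\setminus B(t-1)$ of files that can still enter shrinks with $t$. Intuitively, once the genuinely popular files have been captured, a fresh entry requires an atypical window that briefly promotes an as-yet-uncaptured file, an event whose probability can only decrease as the pool of uncaptured files thins. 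I would make this precise by coupling the request stream so that $B(t-1)$ is monotone along the coupled sample paths, and then arguing that the per-step entry event is stochastically dominated by the corresponding event one step earlier.

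As a sanity check and a source of the right envelope, note that grouping time into disjoint blocks of length $w$ makes the events $\{j\in A(mw)\}$, $m\ge1$, independent with probability $p_j$, so that $\mathbb{P}(T_j>mw)\le(1-p_j)^{m}$: the survival function decays at least geometrically, and a geometric survival sequence has non-increasing increments $(1-p_j)^{m-1}p_j$, which is exactly the per-item concavity we want. The main obstacle is upgrading this block-level (subsampled) picture to a step-by-step statement valid at every $t$, since the overlapping windows make consecutive selections $A(t)$ and $A(t+1)$ strongly dependent (they share $w-1$ samples), so $T_j$ is genuinely not geometric. Establishing that the increments $\mathbb{P}(T_j=t)$ are non-increasing at every $t$ --- equivalently that each survival function $S_j(t)=\mathbb{P}(T_j>t)$ is convex --- in the presence of this dependence is where the real work lies, and the monotone-depletion coupling above is the tool I would rely on to close that gap.
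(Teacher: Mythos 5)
Your opening moves are the paper's own: writing $\mathbb{E}[N(t)]=\sum_{j}\mathbb{P}(T_j\le t)$ is exactly the decomposition in \eqref{eq:CB1}, and reducing concavity to non-increasing first differences is the standard discrete reformulation. The genuine gap is that the proposal stops precisely at the step that constitutes the proof, and says so itself: establishing that each survival function $S_j(t)=\mathbb{P}(T_j>t)$ is convex despite the overlap between consecutive windows is ``where the real work lies,'' and the ``monotone-depletion coupling'' offered in its place is neither carried out nor aimed at the operative mechanism. Once the problem is decomposed per item, the thinning pool $\mathcal{L}\setminus B(t-1)$ is irrelevant: $\mathbb{P}(T_j=t)$ concerns only item $j$'s own appearance pattern, and the nesting $B(t-1)\subseteq B(t)$ holds deterministically on every sample path (the bank never evicts), so a coupling has nothing to establish there. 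What would actually close the gap is stationarity, not depletion: for $t\ge w$ the event $E_t=\{j\in A(t)\}$ is a fixed function of the most recent $w$ i.i.d.\ requests, so $(E_t)_{t\ge w}$ is a stationary sequence, and for a stationary sequence of events the first-occurrence pmf is non-increasing --- since $\mathbb{P}(T_j=t+1)=\mathbb{P}(E_{t+1}\cap\bigcap_{s=1}^{t}E_s^c)\le\mathbb{P}(E_{t+1}\cap\bigcap_{s=2}^{t}E_s^c)$, and shift invariance identifies the right-hand side with $\mathbb{P}(T_j=t)$. In your setting that identification is valid only once all indices involved lie in the stationary regime, so the growing-window transient $t<w$ (where the law of $A(t)$ is not shift invariant) still needs separate handling; but that, not a depletion coupling, is the missing lemma.

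It is worth noting that the paper does not do this work either: its proof asserts $\mathbb{P}(j\in B(t))=1-(1-p_j)^{\lceil t/w\rceil}$ exactly, which amounts to analyzing a variant of LFU-Lite whose entry decisions are independent trials over disjoint length-$w$ blocks, and then reads concavity off that formula (itself loose, since a step function of $t$ is not concave; the claim really concerns the block index $\lceil t/w\rceil$ or a piecewise-linear interpolation). Your observation that the sliding window makes the geometric formula only a block-level bound --- consistent with the paper itself using the blocking device as an inequality in the regret analysis, \eqref{eq:reg-ll-st5} --- is therefore a sharper reading of the situation than the paper's own proof. But diagnosing the subtlety is not resolving it: as submitted, the proposal proves the proposition neither for the block variant (where it would be immediate) nor for the sliding-window algorithm it sets out to handle.
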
 
\begin{proof}
Let $\bar{p}_j^t$ be the probability with which file $i$ enters the counter bank by time $t$. Note that $\bar{p}_i^t= 1-(1-p_i)^{{\lceil t/w \rceil}}$ . Where $p_i = \sum_{n=\mu_{C+1}w+1}^{w} {w \choose n} \mu_{i}^n(1-\mu_{i})^{w-n}$ is the probability  that item $i$ enters the counter bank in any given window. 
\begin{align}
\label{eq:CB1}
\mathbb{E}[B(t)]&=\mathbb{E}[\sum_{i=1}^{L}\mathbbm{1}_{\{i \in B(t)\}}]  =\sum_{i=1}^{L} \bar{p}_i^t = \sum_{i=1}^{L} 1-(1-p_i)^{{\lceil t/w \rceil}}
\end{align}
Observe that $ 1-(1-p_i)^{{\lceil t/w \rceil}}$ is concave in $t$ and ${E}[B(t)]$ is a sum of $L$ concave functions, and is hence concave.  
\end{proof}

\begin{remark}
Intuitively, the counter bank will keep the counts only for more popular items. It is also straight forward to show that the expected size of the counter bank decreases with the window length $w$. To see this, consider two different window length $w_{1}$ and $w_{2}$ such that $w_{1} \geq w_{2}$. Let $p_{i}(w)$ be the probability that item $i$ enters the counter bank in any given window when the window of length is $w$. Then, for $i \notin \mathcal{C}$, $p_i(w_1) \leq p_i(w_2)$. Intuitively, larger window length leads to more observations and hence to smaller  probability of observing item $i \notin \mathcal{C}$  more than the threshold $\mu_{C+1}w$.  Now, $(1-p_{i}(w_2))^{{\lceil t/w_2 \rceil}} \leq (1-p_{i}(w_1))^{{\lceil t/w_2 \rceil}} \leq (1-p_{i}(w_1))^{{\lceil t/w_1 \rceil}}$. So, $1 - (1-p_{i}(w_2))^{{\lceil t/w_2 \rceil}} \geq 1- (1-p_{i}(w_1))^{{\lceil t/w_1 \rceil}}$. Hence, the contribution of $i \notin \mathcal{C}$ to the expected size of the counter bank according \eqref{eq:CB1} is smaller for larger window length. The exact dependence of $\mathbb{E}[B(t)]$ on $w$ is cumbersome to characterize. We, however, illustrate this  through extensive simulations in Section \ref{section:Simulations}.
\end{remark}

%
%


\section{Caching with Partial Observation}

We now consider the problem of optimal caching under the partial observation regime. As described earlier, here the algorithm can observe a file request only if the requested file is already in the cache.  Hence, the caching algorithm has to perform an active \textit{exploration} by placing a file in the cache for a sufficiently long time to learn its popularity, in order to decide if that file belongs to the set of the most popular files. This procedure is in sharp contrast to the full observation structure where the popularity of each file in the library can be learned (improved) after each time step.  

However, the exploration is costly because the algorithm incurs regret every time that a sub-optimal file is placed in the cache for exploration.  Hence, the algorithm also has to perform an active \textit{exploitation}, i.e., place the most popular items according to the current estimate in the cache.  The optimal \textit{exploration vs exploitation} trade-off for minimizing the regret is at the core of most online learning algorithms.  The Multi-Armed Bandit (MAB) model is a canonical formalism for this class of problems.  Here, there are multiple arms (actions) that yield random rewards independently over time, with the (unknown) mean of arm $i$ being $\mu_{i}$. The objective is to learn the mean reward of each arm by exploration and maximize cumulative reward  by exploitation.

\subsection{Caching Bandit with Full Posterior Sampling} 

Posterior sampling based algorithms for MAB \cite{agrawal2013further, agrawal2013thompson} typically use a Beta prior (with Bernoulli likelihood) or Gaussian prior (with Gaussian likelihood) in order to exploit the the conjugate pair property of the prior and likelihood (reward) distributions. Hence, the posterior at any time will  have the same form as the prior distribution, albeit with different parameters. This provides a computationally tractable and memory efficient way to keep track of the posterior distribution evolution. However, in the optimal caching problem, the unknown popularity vector $\mu$ has interdependent components through the constraint $\sum_{i} \mu_{i}=1$. Hence, standard prior distributions like Beta will not be able to capture the full posterior evolution in the caching problem.   

We use a Dirichilet prior on the popularity distribution $\mu = (\mu_{1}, \ldots, \mu_{L})$, parametrized by $\alpha = (\alpha_{1}, \ldots, \alpha_{L})$. More precisely,
\begin{equation*}
f_{0}(\mu ; \alpha) = \frac{1}{B(\alpha)}\prod_{i=1}^L \mu_i^{\alpha_i - 1},~\text{where,}~ B(\alpha) = \frac{\prod_{i=1}^L\Gamma(\alpha_i)} {\Gamma(\sum_{i=1}^L \alpha_i)} 
\end{equation*}   
and $\Gamma (\cdot)$ is the Gamma function.   
 
Let $f_{t}$ be the posterior distribution at time $t$ with parameter $\alpha(t)$. The posterior is updated according to the observed information $s(t)$.  In the case of a hit, 
 the file request $x(t)$ is observed and $s(t) = x(t)$. It is easy to see that the correct posterior update is $\alpha(t) = \alpha(t) + e_{x(t)}$, where  $e_{x(t)}$ is the unit vector with non-zero element at index $x(t)$. 

The posterior update is complex in the case of the cache miss. Given the current parameter $\alpha(t) = \alpha$, the probability of a cache miss is given by 
\begin{align*} 
\mathbb{P}(s(t) = 0) &= \sum_{j \notin C(t)} \Pr(x(t)=j) \\
&=\sum_{j \notin C(t)} \int_S \mu_j \frac{\prod_{i=1}^K \mu_i^{\alpha_i-1}}{B(\alpha)} dP =\frac{\sum_{j \notin C(t)}\alpha_j}{\sum_{i=1}^L \alpha_i}
\end{align*}
Then we can show that the posterior distribution can be computed as

\begin{align*}
&f(\mu| s(t) = 0) = \mathbb{P}(s(t) = 0|\mu) f(\mu;\alpha) / \mathbb{P}(s(t) = 0) \\
&= \frac{1}{\sum_{j \notin C(t)} \alpha_j } \sum_{j \notin C(t)} \alpha_j f(\mu ; \alpha+e_j)
\end{align*}  

\begin{algorithm}
\caption{CB-FPS Algorithm} 
\label{alg:fps} 
\begin{algorithmic}
\STATE \text{Initialize the prior distribution $f_{0}$ }
\FOR{$t = 1, \dots, T$} 
\STATE Sample $\hat{\mu}(t) \sim f_{t}(\cdot) $
\STATE Select $C_{\text{FPS}}(t) = \arg \max_{C} \hat{\mu}(t)$
\STATE Receive the observation $s(t)$
\STATE Update the posterior $f_{t+1}(\mu) \propto \mathbb{P}(s(t)|\mu) f_{t}(\mu)$
\ENDFOR
\end{algorithmic}
\end{algorithm}

Hence, with each miss, the algorithm needs to store a set of size $(L-C)$ consisting of Dirichlet parameters.  As the number of misses increases, the memory required to store these parameters will increase exponentially. Hence, the full posterior update algorithm is infeasible from an implementation perspective.   

We  present the CB-FPS in Algorithm \ref{alg:fps}.  
In Section~\ref{section:Simulations}, we will see that a Monte Carlo version of this algorithm  can be implemented for small values of $L$ and $C$, which seems to achieve an $O(1)$ regret.  A rigorous proof that shows such regret, even in some special cases and by neglecting computational tractability, is an interesting open problem.

\subsection{Caching Bandit with Marginal Posterior Sampling} 

We now propose an algorithm that only performs a marginal posterior update.  Instead of maintaining a Dirchlet prior for the popularity vector $\mu$, we use a Beta prior for the popularity of each \emph{individual} item $\mu_{i}$.  
The CB-MPS is described in Algorithm~\ref{alg:mps}.     

\begin{algorithm}
\caption{CB-MPS Algorithm} 
\label{alg:mps} 
\begin{algorithmic}
\STATE \text{Initialize $\alpha_i(0) = 1, \beta_i(0) = 1, \forall i \in \mathcal{L}.$ }
\FOR{$t = 1, \dots, T$}
\STATE Generate samples $\hat{\mu}_i(t) \sim \text{Beta}(\alpha_i(t),\beta_i(t))$
\STATE $C_{\text{MPS}}(t) \gets \arg \max_{C} \hat{\mu}(t)$ 
\IF {$x(t) \in C(t)$}
\STATE $\alpha_{x(t)}(t+1) \leftarrow \alpha_{x(t)}(t) + 1$
\STATE $\beta_{i}(t+1) \leftarrow \beta_{i}(t) + 1, \forall i \in C(t), i \neq x(t)$
\ENDIF 
\ENDFOR
\end{algorithmic}
\end{algorithm}

We now provide a performance guarantee for the CB-MPS algorithm.
\begin{thm}
Under marginal posterior sampling algorithm, $\mathbb{E}[R(T)] = O((L-C) C \log T)$.
\end{thm}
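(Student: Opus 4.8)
The plan is to reduce the regret to a count of how often each suboptimal item is cached, and then to run a Thompson-sampling-style analysis showing that this count is $O(\log T)$. I would start exactly as in the proof of Theorem~\ref{thm1:regretLFU}, since the decomposition \eqref{eq:lfu-rstep1}--\eqref{eq:lfu-rstep5} is algorithm-independent and yields
\[
\mathbb{E}[R(T)] \leq \mathbb{E}\Big[\sum_{t=1}^{T}\sum_{j=1}^{C}\sum_{k=C+1}^{L}\Delta_{j,k}\,\mathbbm{1}\{j\notin C(t),\,k\in C(t)\}\Big].
\]
Since $\{j\notin C(t),k\in C(t)\}\subseteq\{k\in C(t)\}$ and $\Delta_{j,k}\le 1$, it suffices to show that for each suboptimal item $k\notin\mathcal{C}$ the expected number of rounds in which $k$ is cached is $O(\log T)$; summing the resulting bound over the $C(L-C)$ pairs $(j,k)$ then produces the claimed $O((L-C)C\log T)$ regret.

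Before the bandit argument, I would first identify what CB-MPS actually estimates, which is the nonstandard feature of this problem. Because $\alpha_i,\beta_i$ are incremented \emph{only on a hit}, the posterior $\mathrm{Beta}(\alpha_i(t),\beta_i(t))$ concentrates not around $\mu_i$ but around the conditional hit probability $\nu_i(C(t)) = \mu_i/\sum_{\ell\in C(t)}\mu_\ell$ of item $i$ given that some cached item was requested. The single structural fact I need is that this normalization preserves the ordering that the decomposition sees: a short computation shows that, across any cache configuration arising here, $\mu_k<\mu_j$ implies $\nu_k<\nu_j$, and moreover the induced gaps are bounded below by a fixed multiple of $\Delta_{j,k}$. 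Hence $\mathcal{C}$ remains the unique top-$C$ set in terms of these effective means and the effective gaps are $\Theta(\Delta_{j,k})$, which is all the regret bound requires.

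For the Thompson-sampling step I would work pairwise. If $j\notin C(t)$ and $k\in C(t)$ then the sample drawn for the suboptimal item beats that drawn for the optimal item, $\hat{\mu}_k(t)\ge\hat{\mu}_j(t)$. Following the Agrawal--Goyal decomposition, I fix a threshold $x_{j,k}$ with $\nu_k<x_{j,k}<\nu_j$ and split the event into (i) $\hat{\mu}_j(t)\le x_{j,k}$ and (ii) $\hat{\mu}_k(t)>x_{j,k}$. For (i), the optimal items are cached in almost every round, so $j$'s observation count grows linearly, its posterior concentrates around $\nu_j>x_{j,k}$, and controlling the lower tail of the Beta samples makes $\sum_t\mathbb{P}(\hat{\mu}_j(t)\le x_{j,k})$ a convergent (constant) series. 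For (ii), each caching of $k$ sharpens its posterior around $\nu_k<x_{j,k}$, so a Beta/Chernoff concentration bound shows that after $O(\log T/(x_{j,k}-\nu_k)^2)$ cachings the sample exceeds $x_{j,k}$ with probability only $O(1/T)$, capping the expected number of cachings of $k$ at $O(\log T)$.

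The main obstacle is making this rigorous in the presence of two features absent from the textbook setting: the posterior of $k$ is updated only in the rounds where $k$ is actually cached, and its target $\nu_k(C(t))$ drifts with the co-cached items, so the hit/miss observations are not i.i.d.\ Bernoulli with a fixed mean; and the selection rule is the combinatorial top-$C$ operator rather than a single $\arg\max$. I would handle the first by conditioning on the filtration generated by the counts and using the ordering and gap facts above to replace the moving target $\nu_k(C(t))$ by a fixed value safely below $x_{j,k}$, and the second by the pairwise reduction, which decouples the combinatorial selection into $C(L-C)$ scalar comparisons. The delicate point on which the whole argument hinges is confirming that the optimal items really do remain cached often enough to keep their posteriors concentrated, rather than being starved by a run of inflated suboptimal samples; the self-correcting structure of posterior sampling is what breaks this apparent circularity, and the careful conditioning is what turns that intuition into a bound.
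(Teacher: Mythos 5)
Your regret decomposition and the reduction to bounding the expected number of rounds each suboptimal item spends in the cache are fine, and you have correctly spotted the feature that makes this problem nonstandard: CB-MPS updates $(\alpha_i,\beta_i)$ only on hit rounds, so item $i$'s posterior concentrates around (a mixture of) the conditional probabilities $\nu_i(S)=\mu_i/\sum_{\ell\in S}\mu_\ell$ over the configurations $S$ in which $i$ was cached, not around $\mu_i$. The gap is your ``short computation.'' The claim that $\mu_k<\mu_j$ implies $\nu_k<\nu_j$ with gaps of order $\Delta_{j,k}$ holds only when $j$ and $k$ are compared \emph{within the same configuration}; but that is never the comparison your argument needs, because $j$'s and $k$'s posteriors are built from disjoint sets of rounds with different normalizing denominators. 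Concretely, take $L=4$, $C=2$, $\mu=(0.5,0.3,0.15,0.05)$. When item $3$ is cached with item $4$ one gets $\nu_3(\{3,4\})=0.15/0.2=0.75$, while the optimal items cached together have $\nu_1(\{1,2\})=0.625$ and $\nu_2(\{1,2\})=0.375$: the suboptimal item's effective mean \emph{exceeds} both optimal items' effective means. Hence no fixed threshold $x_{j,k}$ with $\sup_S \nu_k(S) < x_{j,k} < \inf_S \nu_j(S)$ exists, and your plan to ``replace the moving target $\nu_k(C(t))$ by a fixed value safely below $x_{j,k}$'' cannot be carried out as stated. Arguing that the inflating configurations (suboptimal items cached with very unpopular companions) occur too rarely to matter is exactly the statement being proved, so appealing to it is circular; the self-correction you invoke is real as intuition but is the hard part, not a conditioning detail.

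For comparison: the paper does not prove this theorem at all. It omits the proof and asserts that the multiple-play Thompson sampling analysis of \cite{komiyama2015optimal} applies ``with small modifications.'' That analysis assumes semi-bandit feedback, i.e., for \emph{every} cached item one observes each round whether it was requested --- which is precisely the information CB-MPS throws away on miss rounds (note that a miss \emph{is} observable here, since $s(t)=0$). If the algorithm also incremented $\beta_i$ for all cached $i$ on a miss, each cached item would receive an unbiased $\mathrm{Bernoulli}(\mu_i)$ observation per cached round, the drifting-target and ordering issues would disappear, and your Agrawal--Goyal pairwise program (events (i) and (ii), plus the standard treatment of the optimal arms' concentration) would go through essentially verbatim. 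So the two honest routes are: (a) analyze the miss-updating variant, reducing cleanly to \cite{komiyama2015optimal}, or (b) keep hit-only updates and supply a genuinely new argument controlling the bias of the conditional estimates --- something neither your sketch nor the paper actually provides.
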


We omit the proof of this theorem because the analysis is  similar to that of multi-payer multi-armed bandit algorithm. In particular, the posterior sampling method proposed in \cite{komiyama2015optimal} can be used with small modifications to show the above result.

%

\subsection{Caching Bandit with Structural Information}

Even though the MPU algorithm is easy to implement, it suffers an $O(\log T)$ regret, which is much worse than the $O(1)$ regret incurred by LFU and LFU-Lite. This is due to  the partial observation structure that limits the rate of learning.  We now propose an algorithm that we call Caching Bandit with Structural Information (CB-SI). We show that with a minimal assumption on the availability of the structural information about the popularity distribution, CB-SI can achieve an $O(1)$ regret even in the partial observation regime.  

We assume that the algorithm knows the value of $\mu_{C}$ and $\Delta_{\min}$, the popularity value of the $C$th most popular item and the optimality gap.  Note that we do not assume knowledge of the identity of the $C$th most popular file. We note that our proof approach follows the techniques developed in \cite{bb13}, which can be considered as a special case with $C=1$. 
CB-SI algorithm is given in Algorithm \ref{alg:cb}.   
\begin{algorithm}
\caption{CB-SI Algorithm} 
\label{alg:cb}
\begin{algorithmic}
\STATE {Initialize $\alpha_i(0) = 0, \beta_i(0) = 0, \hat{\mu}_{i}(0) = 1/L, \forall i \in \mathcal{L}.$ }
\STATE Initialize $n_{i}(t) = 0, \forall i \in \mathcal{L}$
\FOR{$t = 1, \dots, T$}
\STATE Compute the set $A(t) = \{  i \in \mathcal{L}: \hat{\mu}_{i}(n_{i}(t)) \ge \mu_C -\frac{\Delta}{2} \}$
\IF {$|A(t)| \geq C$}
\STATE $C_{SI}(t)  = \argmax_{C} \hat{\mu}_{j}(n_{j}(t), j \in A(t))$
\STATE $Z_t \gets 1$
\ELSE
\STATE For each $i \in \mathcal{L} \setminus A(t)$, compute 
\begin{align*}
&p_{i}(t) = c/(\mu_C - \hat{\mu}_{i}(n_{i}(t)))^2 \\
&c = \sum_{i \in \mathcal{L} \setminus C(t)} 1/(\mu_C - \hat{\mu}_{i}(n_{i}(t)))^{2}
\end{align*} 
\STATE Sample $C - |A(t)|$ elements from the set $\mathcal{L} \setminus A(t)$ according to the probability $p_{i}(t)$. Denote these elements as $B(t)$
\STATE $C_{\text{SI}}(t) = A(t) \cup B(t)$
\STATE $Z_t \gets 2$
\ENDIF
\STATE Place the files $C_{\text{SI}}(t)$ in the cache
\IF {$x(t) \in C(t)$}
\STATE $\alpha_{x(t)}(t+1) \leftarrow \alpha_{x(t)}(t) + 1$
\STATE $\beta_{i}(t+1) \leftarrow \beta_{i}(t) + 1, \forall i \in C(t), i \neq x(t)$
\ENDIF 
\STATE $n_{i}(t+1) = \alpha_{i}(t+1)+\beta_{i}(t+1)$
\STATE  $\hat{\mu}_{i}(n_{i}(t+1)) =  \alpha_{i}(t+1)/n_{i}(t+1) $
\ENDFOR
\end{algorithmic} 
\end{algorithm}


\begin{thm}
The expected cumulative regret of CB-SI Algorithm is, 
\begin{align*}
\mathbb{E}[R(T)] \leq C \sum_{j \in \mathcal{L} \setminus \mathcal{C}} ( \frac{2}{\Delta^2} + \frac{4} {\Delta^2} [4+\frac{32}{\Delta^2} \exp(-\frac{\Delta^2}{8})] ),
\end{align*}
where $\Delta = \mu_C - \mu_{C+1}$.
\end{thm}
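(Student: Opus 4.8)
The plan is to reduce the regret to a bound on the expected number of times each suboptimal file is cached, and then to control that count by separating the two mechanisms through which a suboptimal file enters the cache. First I would reuse the distribution-free decomposition \eqref{eq:lfu-rstep5}, which holds for any caching rule: $\mathbb{E}[R(T)] \le \mathbb{E}[\sum_{t}\sum_{j=1}^{C}\sum_{k=C+1}^{L}\Delta_{j,k}\mathbbm{1}\{j\notin C(t), k\in C(t)\}]$. Bounding each $\Delta_{j,k}\le 1$ and using $\sum_{j\in\mathcal{C}}\mathbbm{1}\{j\notin C(t)\}\le C$ collapses this to $\mathbb{E}[R(T)] \le C\sum_{k\in\mathcal{L}\setminus\mathcal{C}}\mathbb{E}[n_k(T)]$, where $n_k(T)=\sum_{t}\mathbbm{1}\{k\in C(t)\}$. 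This already produces the factor $C$ and the summation over $\mathcal{L}\setminus\mathcal{C}$ that appear in the statement, so it remains to prove the per-file bound $\mathbb{E}[n_k(T)]\le \frac{2}{\Delta^2}+\frac{4}{\Delta^2}[4+\frac{32}{\Delta^2}\exp(-\frac{\Delta^2}{8})]$.

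Next I would split $n_k(T)$ according to why $k$ is placed. Since $C_{\text{SI}}(t)=A(t)$ in exploit rounds and $C_{\text{SI}}(t)=A(t)\cup B(t)$ in explore rounds, with $B(t)\subseteq\mathcal{L}\setminus A(t)$, the events $\{k\in A(t)\}$ and $\{k\in B(t)\}$ are disjoint and together cover $\{k\in C(t)\}$. The first (Type~1) count is the easy one: membership in $A(t)$ forces $\hat{\mu}_k(n_k(t))\ge\mu_C-\Delta/2$, and since $\mu_k\le\mu_{C+1}=\mu_C-\Delta$ the required upward deviation is at least $\Delta/2$. Treating each caching of $k$ as yielding a fresh Bernoulli$(\mu_k)$ observation and noting that every placement increments $n_k$, the number of Type~1 placements is at most the number of observation indices $m$ with $\hat{\mu}_k(m)\ge\mu_C-\Delta/2$; taking expectations and applying Hoeffding gives $\sum_{m\ge1}e^{-2m(\Delta/2)^2}=\sum_{m\ge1}e^{-m\Delta^2/2}\le 2/\Delta^2$, which is exactly the first term.

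The hard part is the explore (Type~2) count $\sum_{t}\mathbbm{1}\{k\in B(t)\}$, and here I would follow the argument of \cite{bb13} specialized to the $C$-slot setting. In an explore round $|A(t)|<C$, so at least one optimal file $j\in\mathcal{C}$ fails the threshold, i.e. $\hat{\mu}_j(n_j(t))<\mu_C-\Delta/2$; exploration therefore persists only while some optimal file is under-observed or has deviated downward, and once every optimal file has accumulated enough observations it remains in $A(t)$ permanently (up to concentration failures controlled by $e^{-\Delta^2/2}$-type tails), after which no explore round occurs. The quantitative control comes from the sampling law $p_i(t)\propto 1/(\mu_C-\hat{\mu}_i(n_i(t)))^2$: a sampled suboptimal $k$ has $k\notin A(t)$, hence $\mu_C-\hat{\mu}_k>\Delta/2$ and $1/(\mu_C-\hat{\mu}_k)^2<4/\Delta^2$, while the normalizer is kept large by the under-observed optimal files whose estimates sit near $\mu_C$. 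I would then partition the explore rounds by whether $k$ has fewer or more than a threshold number of observations: below the threshold $k$ can be sampled only a bounded number of times (the constant $4$ inside the bracket), while above it $\hat{\mu}_k$ concentrates near $\mu_k\le\mu_C-\Delta$ except on a deviation event of probability $e^{-\Delta^2/8}$, contributing the $\frac{32}{\Delta^2}e^{-\Delta^2/8}$ correction, with the $4/\Delta^2$ prefactor being the worst-case inverse-square gap of a just-below-threshold file. The main obstacle is precisely this Type~2 analysis: coupling the number of explore rounds to the growth of the optimal files' observation counts under a randomized, state-dependent sampling rule, and showing the normalizer stays bounded below so that the per-round sampling probability $p_k(t)$ is summable. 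Assembling the two terms yields the claimed per-file bound and hence the theorem.
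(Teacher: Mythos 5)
Your reduction to $\mathbb{E}[R(T)]\le C\sum_{k\in\mathcal{L}\setminus\mathcal{C}}\sum_t \mathbb{P}(k\in C(t))$ and your split into exploit rounds ($k\in A(t)$) versus explore rounds ($k\in B(t)$) match the paper exactly, and your Hoeffding bound of $2/\Delta^2$ for the exploit term is the paper's first term. But the explore term, which you yourself flag as ``the main obstacle,'' is precisely where your proposal has a genuine gap: your plan of partitioning explore rounds by whether $k$ is above or below a threshold number of its own observations does not work, and the constants you attach to the two pieces (the $4$ and the $\frac{32}{\Delta^2}e^{-\Delta^2/8}$) are reverse-engineered from the statement rather than derived. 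There is no mechanism by which an under-observed suboptimal $k$ ``can be sampled only a bounded number of times'': in an explore round the sampling probability $p_k(t)$ depends on the estimates of \emph{all} files outside $A(t)$, and if several optimal files also have low estimates, $p_k(t)$ need not be small; nothing in an analysis that looks only at $k$'s own counter makes $\sum_t \mathbb{P}(k\in B(t))$ summable.

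The paper closes this gap with a change-of-measure (importance-ratio) trick that your sketch gestures at but never executes. On the event $\{\hat{\mu}_k(n_k(t))\le \mu_C-\Delta/2,\ Z_t=2\}$ one writes, for an \emph{arbitrary optimal} file $i\in\mathcal{C}$,
\begin{align*}
p_{k,t}\;=\;\frac{p_{k,t}}{p_{i,t}}\,p_{i,t}\;=\;\frac{(\mu_C-\hat{\mu}_i(n_i(t)))^2}{(\mu_C-\hat{\mu}_k(n_k(t)))^2}\,p_{i,t}\;\le\;\frac{4}{\Delta^2}\,\bigl|\mu_C-\hat{\mu}_i(n_i(t))\bigr|^2\,p_{i,t},
\end{align*}
so the probability of exploring $k$ with a bad estimate is dominated, up to $4/\Delta^2$, by $\mathbb{E}\bigl[\,|\mu_C-\hat{\mu}_i(n_i(t))|^2\,\mathbbm{1}\{i\in C(t),\ \hat{\mu}_i(n_i(t))<\mu_C-\Delta/2\}\bigr]$. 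This transfers the analysis from the suboptimal file $k$ to an optimal file $i$, whose estimate deviating below $\mu_C-\Delta/2$ while $i$ is cached is a self-correcting, rare event; the paper then bounds $\sum_t \mathbb{E}[\,|\mu_C-\hat{\mu}_i(t)|^2\,\mathbbm{1}\{|\hat{\mu}_i(t)-\mu_C|>\Delta/2\}\,]$ by a tail-integral computation (splitting $\int_0^\infty \mathbb{P}(\cdot\ge x)\,dx$ at $x=\Delta^2/4$ and applying Hoeffding to each piece), which is where $4+\frac{32}{\Delta^2}e^{-\Delta^2/8}$ actually comes from. Without this ratio comparison---or some equivalent argument tying the number of explore rounds to the concentration of the \emph{optimal} files' estimates---your Type~2 bound cannot be completed, so the proposal as written does not prove the theorem.
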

\begin{proof}
We denote $\Delta_j = \mu_C - \mu_j$. In this proof, we will show that the expected regret of CB-SI algorithm is bounded above by a constant.
From the proof of Theorem \ref{thm1:regretLFU} (c.f. \eqref{eq:lfu-r2}), we get that the expected regret is bounded as below.

\begin{align}\label{L}
&\mathbb{E}[R(T)] \leq \mathbb{E}[ \sum^{T}_{t=1} \sum^{C}_{i=1} \sum^{L}_{j=C+1}\Delta_{j, k} \mathbbm{1}\{i \notin C(t), j \in C(t)\}]  \nonumber \\ 
&\leq  C ~ \mathbb{E}[  \sum^{T}_{t=1} \sum^{L}_{j=C+1}  \mathbbm{1}\{j \in C(t)\} ] =  C  \sum^{L}_{j=C+1}\sum^{T}_{t=1}   \mathbb{P}(j \in C(t))  \nonumber  \\
&=   C  \sum^{L}_{j=C+1}\sum^{T}_{t=1}  ( \mathbb{P}(\hat{\mu}_{j}(n_{j}(t))  > \mu_{C} - \Delta_{j}/2, j \in C(t) )  \nonumber  \\
&\hspace{1.5cm}  + \mathbb{P}(\hat{\mu}_{j}(n_{j}(t))  \leq \mu_{C} - \Delta_{j}/2, j \in C(t) )).   
\end{align}   
We address each term in the above summation separately. First, observe that for any $j \in \{C+1 \dots L\},$ the following inequality holds.
\begin{align*}
&\sum^{T}_{t=1}  \mathbb{P}(\hat{\mu}_{j}(n_{j}(t))  > \mu_{C} - \Delta_{j}/2, j \in C(t) )\\
&\leq \sum^{T}_{t=1}  \mathbb{P}(\hat{\mu}_{j}(n_{j}(t))  > \mu_{j} + \Delta_{j}/2, j \in C(t) )\\ 
&\leq  \sum^{T}_{t=1}  \mathbb{P}(\hat{\mu}_{j}(t)  > \mu_{j} + \Delta_{j}/2 )  \stackrel{(a)} \leq \sum^{T}_{t=1} e^{-\Delta^{2}_{j}t/2} \leq \frac{2}{\Delta^{2}_{j}}, \numberthis \label{N}
\end{align*}
where the inequality $(a)$ follows from Hoeffding's inequality.\\
For bounding the second term in \eqref{L}, we use the policy definition. Since $\Delta_j \ge \Delta$, the first inequality follows trivially. The equality$(b)$ follows from the fact that when the mean estimate of the $j^{th}$ item is smaller than $\mu_C - \frac{\Delta}{2}$, the only means by which it can enter the cache is through exploration part of the algorithm, which is denoted by $Z_t = 2.$
\begin{align*}
&\mathbb{P} (\hat{\mu}_j(n_j(t)) \leq \mu_C-\frac{\Delta_j}{2}, j \in C(t)) \\
&\leq \mathbb{P} (\hat{\mu}_j(n_j(t)) \leq \mu_C-\frac{\Delta}{2}, j \in C(t)) \\
&\stackrel{(b)}= \mathbb{P} (\hat{\mu}_j(n_j(t)) \leq \mu_C-\frac{\Delta}{2}, j \in C(t),Z_t=2) \\
&\stackrel{(c)}=\mathbb{P} ( j \in C(t) | \hat{\mu}_j(n_j(t)) \leq \mu_C-\frac{\Delta}{2},Z_t=2)  \\
&\hspace{50pt} \mathbb{P} ( \hat{\mu}_j(n_j(t)) \leq \mu_C-\frac{\Delta}{2}, Z_t=2) \\
&= p_{j,t} \mathbb{P} \left( \hat{\mu}_j(n_j(t)) \leq \mu_C-\frac{\Delta}{2}, Z_t=2 \right)  \\
&= \mathbb{E} \left[p_{j,t} \mathbbm{1} \{ \hat{\mu}_j(n_j(t)) \leq \mu_C-\frac{\Delta}{2}, Z_t=2\} \right]  \\
&=  \mathbb{E} \left[ \frac{p_{j,t}}{p_{i,t}} p_{i,t} \mathbbm{1} \{ \hat{\mu}_j(n_j(t)) \leq \mu_C-\frac{\Delta}{2}, Z_t=2\}\right] ,
\end{align*}
for any $i \in \mathcal{C}$.
Note that, in the equality${(c)}$, we used the definition of $p_{j,t}$.
Now, substituting the value for the sampling probability $p_{i,t}$, we obtain,
\begin{align*}
& \leq \mathbb{E} \left[ \frac{|\mu_C-\hat{\mu}_i(n_i(t))|^2} {(\frac{\Delta}{2})^2} p_{i,t} 1 \{ \hat{\mu}_j(n_j(t)) \leq \mu_C-\frac{\Delta}{2}, Z_t=2\}\right]\\
& \leq \frac{4} {\Delta^2} \mathbb{E} \left[ |\mu_C-\hat{\mu}_i(n_i(t))|^2 p_{i,t} \mathbbm{1} \{ Z_t=2\} \right]  \\
& \leq \frac{4} {\Delta^2} \mathbb{E} \left[ |\mu_C-\hat{\mu}_i(n_i(t))|^2 \right. \\
& \left. \hspace{40pt} \mathbb{P} \left(i \in C(t) | \hat{\mu}_i(n_i(t)) \leq \mu_C-\frac{\Delta}{2}, Z_t=2 \right)\right]  \\
& = \frac{4} {\Delta^2} \mathbb{E} \left[ |\mu_C - \hat{\mu}_i( n_i(t) )|^2 \right.\\
& \left. \hspace{40pt} \mathbb{E} \left[ \mathbbm{1}\{i \in C(t)\} | \hat{\mu}_i(n_i(t)) < \mu_C-\frac{\Delta}{2}, Z_t=2 \right] \right]  \\
& = \frac{4} {\Delta^2} \mathbb{E} \left[ |\mu_C-\hat{\mu}_i(n_i(t))|^2  \right.\\
&  \left. \hspace{40pt} \mathbb{E} \left[ \mathbbm{1} \{ i \in C(t), \hat{\mu}_i(n_i(t)) < \mu_C-\frac{\Delta}{2}\} | \right. \right.\\
&  \left. \left.\hspace{60pt} \{\hat{\mu}_i(n_i(t)) < \mu_C-\frac{\Delta}{2}, Z_t=2 \} \right] \right]  \\
& = \frac{4} {\Delta^2} \mathbb{E} \left[ |\mu_C-\hat{\mu}_i(n_i(t))|^2 \right. \\
&\left. \hspace{40pt} \mathbbm{1} \{ i \in C(t), \hat{\mu}_i(n_i(t)) < \mu_C-\frac{\Delta}{2}\} \right] \numberthis \label{M}.
\end{align*}
Here, the inequalities follow from the properties of conditional expectation.
Now, we obtain a bound for the second term in \eqref{L}, using \eqref{M}, as below.
\begin{align*}
&\sum_{t=1}^T \mathbb{E} \left[ |\mu_C-\hat{\mu}_i(n_i(t))|^2 1\{\hat{\mu}_i(n_i(t)) < \mu_C-\frac{\Delta}{2}, i \in C(t)\} \right] \\
&\leq \sum_{t=1}^T \mathbb{E} \left[ |\mu_C-\hat{\mu}_{i}(t)|^2 1\{|\hat{\mu}_{i}(t)-\mu_C| >\frac{\Delta}{2}\} \right]\\
&= \sum_{t=1}^T \int_{0}^{\infty} \mathbb{P} ( |\mu_C-\hat{\mu}_{i}(t)|^2 1\{|\hat{\mu}_{i}(t)-\mu_C| >\frac{\Delta}{2} \} \ge x ) dx \\
&=\sum_{t=1}^T \int_{0}^{\infty} \mathbb{P} ( |\mu_C-\hat{\mu}_{i}(t)|^2 1\{|\hat{\mu}_{i}(t)-\mu_C| ^2 >\frac{\Delta^2}{4} \} \ge x ) dx \\
&= \sum_{t=1}^T \left[ \int_{0}^{\frac{\Delta^2}{4}} \mathbb{P} \left(  |\mu_C-\hat{\mu}_{i}(t)|^2 \right.\right.\\
& \left. \left. \hspace{70pt} 1\{|\hat{\mu}_{i}(t)-\mu_C| ^2 >\frac{\Delta^2}{4} \} \ge x \right) dx \right. \\
& \left. \hspace{10pt}+ \int_{\frac{\Delta^2}{4}}^{\infty}  \mathbb{P} ( |\mu_C-\hat{\mu}_{i}(t)|^2 1\{|\hat{\mu}_{i}(t)-\mu_C| ^2 >\frac{\Delta^2}{4} \} \ge x ) dx \right]\\
&= \sum_{t=1}^T  \left[ \int_{0}^{\frac{\Delta^2}{4}} \mathbb{P} ( |\mu_C-\hat{\mu}_{i}(t)|^2  \ge x, |\hat{\mu}_{i}(t)-\mu_C| ^2 >\frac{\Delta^2}{4} ) dx  \right. \\
&  \left. \hspace{10pt} + \int_{\frac{\Delta^2}{4}}^{\infty}  \mathbb{P} ( |\mu_C-\hat{\mu}_{i}(t)|^2 1\{|\hat{\mu}_{i}(t)-\mu_C| ^2 >\frac{\Delta^2}{4} \} \ge x ) dx \right]\\
&= \sum_{t=1}^T  \left[ \int_{0}^{\frac{\Delta^2}{4}} \mathbb{P}\left(|\hat{\mu}_{i}(t)-\mu_C| ^2 >\frac{\Delta^2}{4} \right) dx  \right. \\
&  \left. \hspace{10pt} + \int_{\frac{\Delta^2}{4}}^{\infty}  \mathbb{P}\left(|\mu_C-\hat{\mu}_{i}(t)|^2 1\{|\hat{\mu}_{i}(t)-\mu_C| ^2 >\frac{\Delta^2}{4} \} \ge x \right) dx \right]\\
&= \sum_{t=1}^T  \left[ \frac{\Delta^2}{4} \mathbb{P} \left(|\hat{\mu}_{i}(t)-\mu_C| ^2 >\frac{\Delta^2}{4} \right) + \right.\\
& \left. \hspace{20pt} \int_{\frac{\Delta^2}{4}}^{\infty}  \mathbb{P}\left(|\mu_C-\hat{\mu}_{i}(t)|^2 1\{|\hat{\mu}_{i}(t)-\mu_C| ^2 >\frac{\Delta^2}{4} \} \ge x \right) dx \right]\\
&= \sum_{t=1}^T  \left[ 2 \frac{\Delta^2}{4} e^{-\frac{t\Delta^2}{8}} + \int_{\frac{\Delta^2}{4}}^{\infty}  \Pr\{ |\mu_C-\hat{\mu}_{i}(t)|^2 \ge x \} dx \right]\\
&\leq 4+\frac{32}{\Delta^2} \exp(-\frac{\Delta^2}{8}) \numberthis \label{P}
\end{align*} 
Combining equations \eqref{L},\eqref{M},\eqref{N},\eqref{P},  we observe:
$$\mathbb{E}[R(T)] \leq C \sum_{j \in \mathcal{L} \setminus \mathcal{C}} \left[ \frac{2}{\Delta^2} + \frac{4} {\Delta^2} [4+\frac{32}{\Delta^2} \exp(-\frac{\Delta^2}{8})] \right]$$
\end{proof}

\begin{rem}
We introduce another version of CB-SI algorithm, which is similar in spirit to LFULite. Following a similar rule to LFULite, we maintain a window of the $W$ past observations, and at each time, the $C$ most frequently requested items in the window are added to the counter bank, if those items are not already present in it.  The mean estimates of CB-SI are calculated only for items in the counter bank.  
We call this algorithm as CB-SILite. In Section~\ref{subsec:traces}, we will observe that CB-SILite drastically reduces the number of counters needed to give a similar hit performance to CB-SI.  
\end{rem}
\section{Simulations}
\label{section:Simulations}
In this section, we start by conducting simulations with requests generated under the IRM model to verify the insights on regret obtained in the earlier sections.  We then use two data traces to compare the performance of our proposed algorithms when exposed to a non-stationary arrival process.   Since these requests change with time, we modify the algorithms to "forget" counts, by halving the counts at a fixed periodicity.  In the full observation regime, we also compare the performance against LRU, which is widely deployed and implicitly has a finite memory (i.e., it automatically "forgets").  We also further explore the reaction of our approaches to non-stationary requests by creating a synthetic trace that exhibits changes at a faster timescale than the data traces. 

\begin{figure*}[htbp]
\centering
\begin{minipage}{.32\textwidth}
\centering
\includegraphics[width=1\columnwidth]{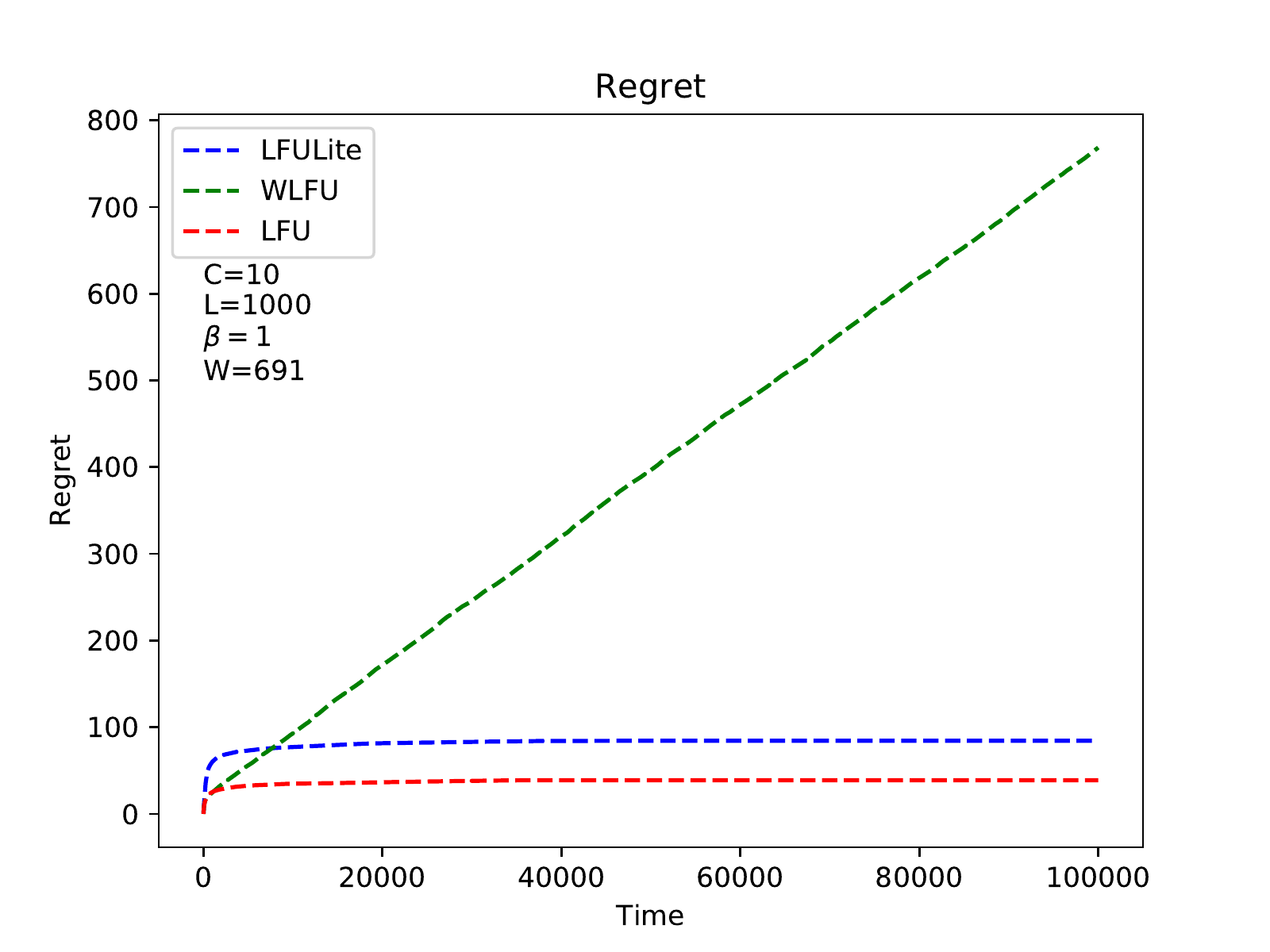}
\caption{Regret of LFU, WLFU, LFU-Lite}
\label{fig:regret_compare}
\end{minipage}\hfill
\begin{minipage}{.32\textwidth}
\centering
\includegraphics[width=1\columnwidth]{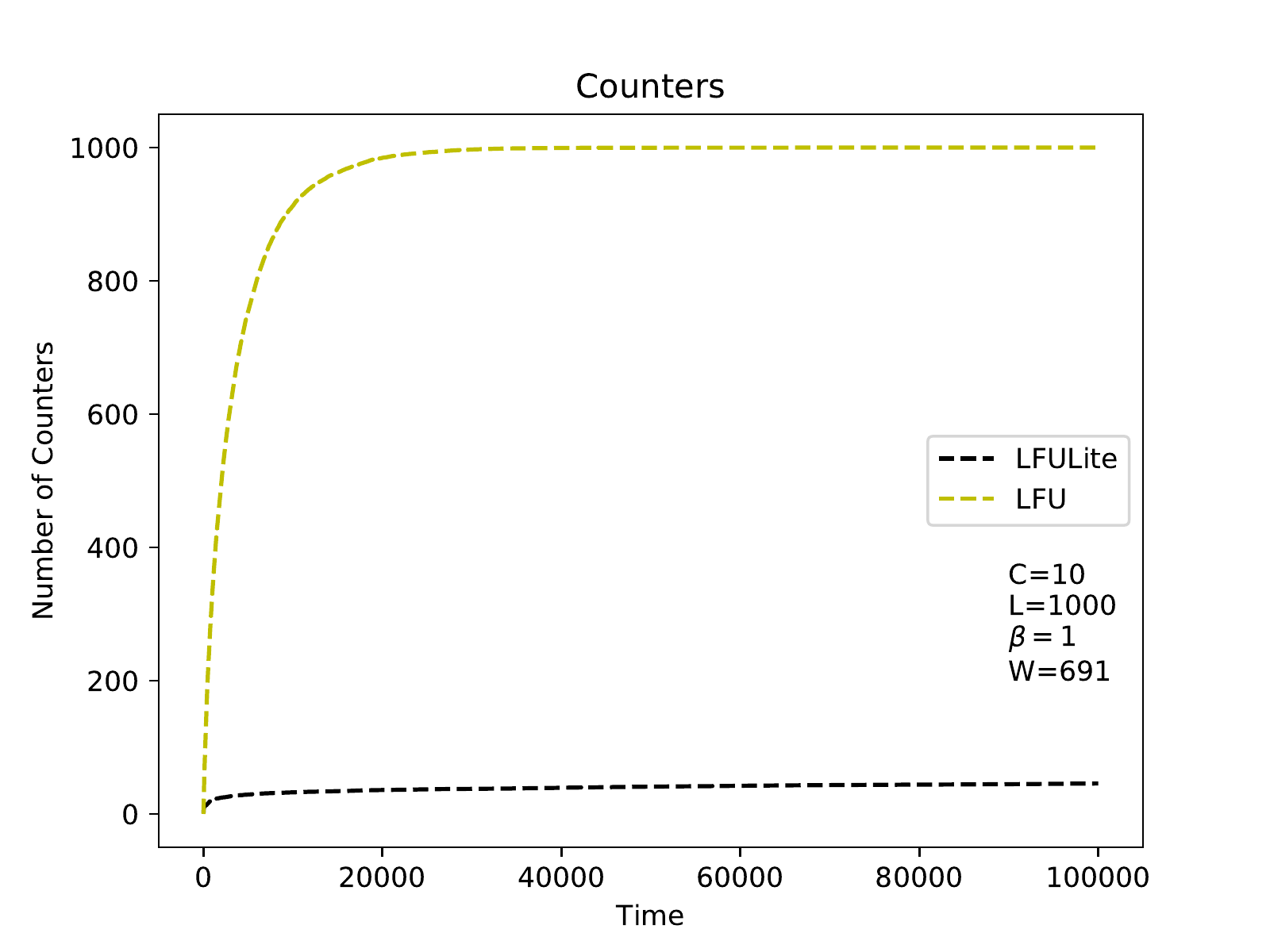}
\caption{Growth of Counters for LFU, LFU-Lite}
\label{fig:counter_compare}
\end{minipage}\hfill
\centering
\begin{minipage}{.32\textwidth}
\centering
\includegraphics[width=1\columnwidth]{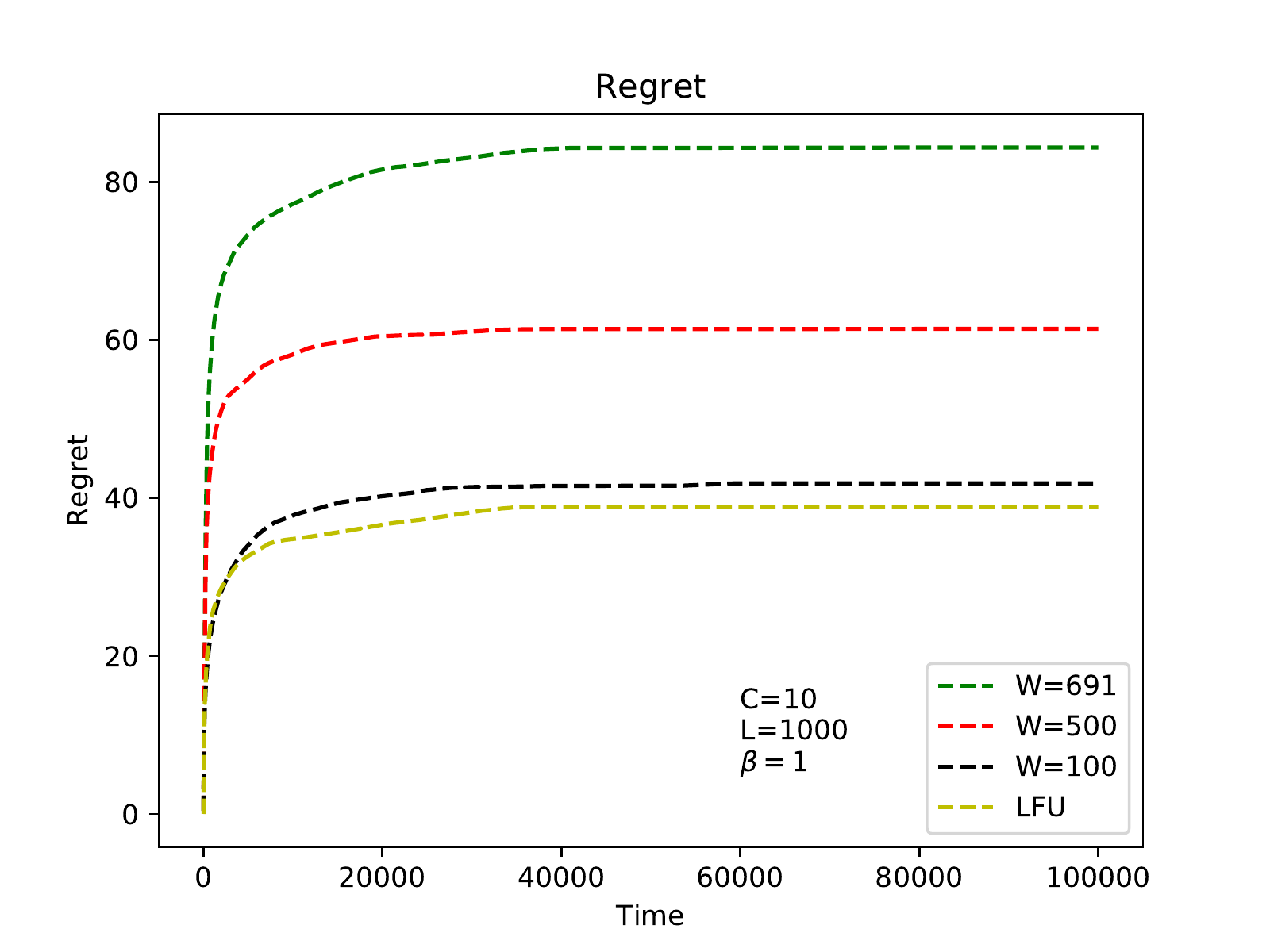}
\caption{Regret of LFU-Lite for varying W}
\label{fig:regret_w}
\end{minipage}\hfill
\begin{minipage}{.32\textwidth}
\centering
\includegraphics[width=1\columnwidth]{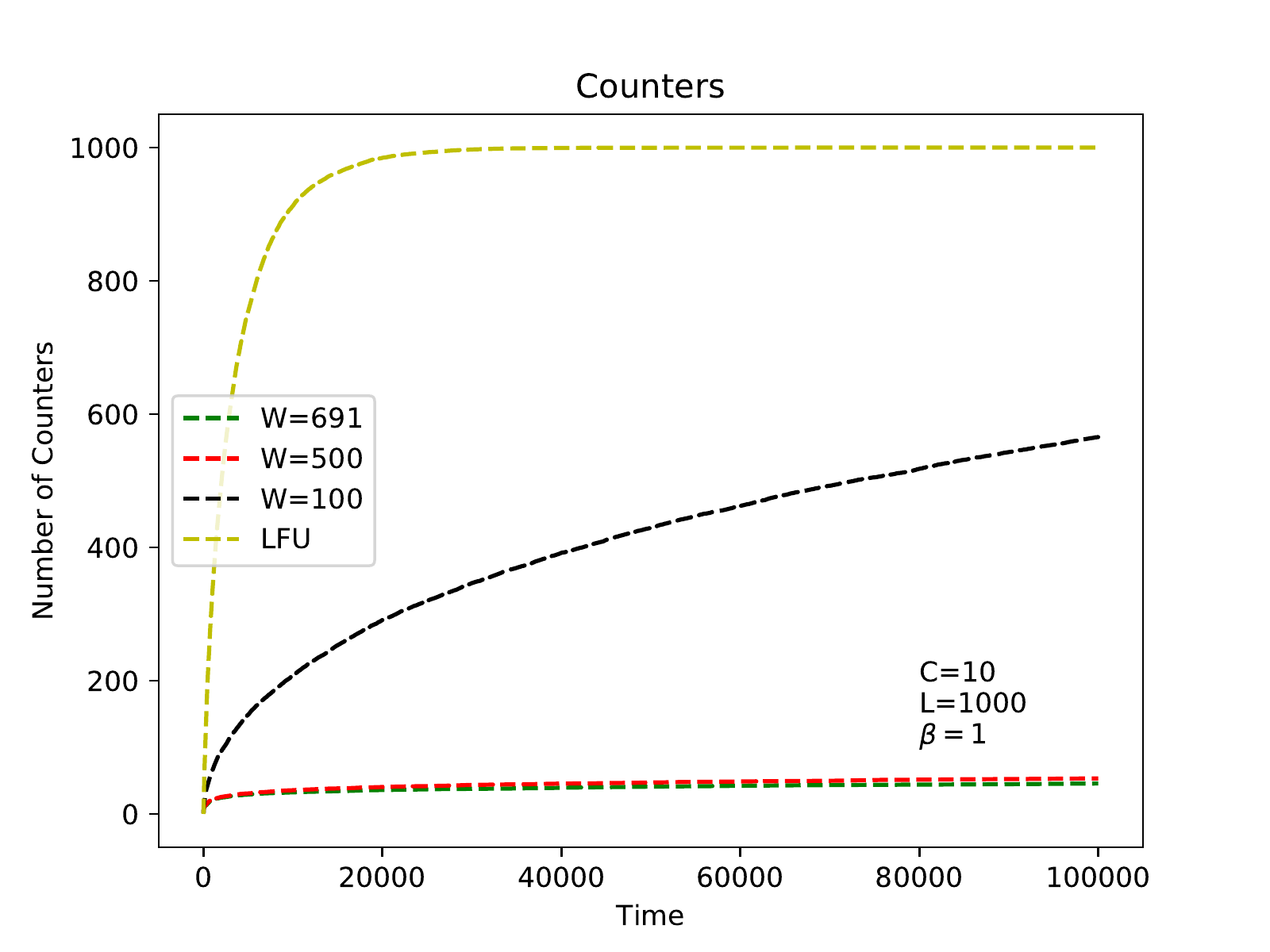}
\caption{Growth of Counters for varying W}
\label{fig:counter_w}
\end{minipage}\hfill
\centering
\begin{minipage}{.32\textwidth}
\centering
\includegraphics[width=1\columnwidth]{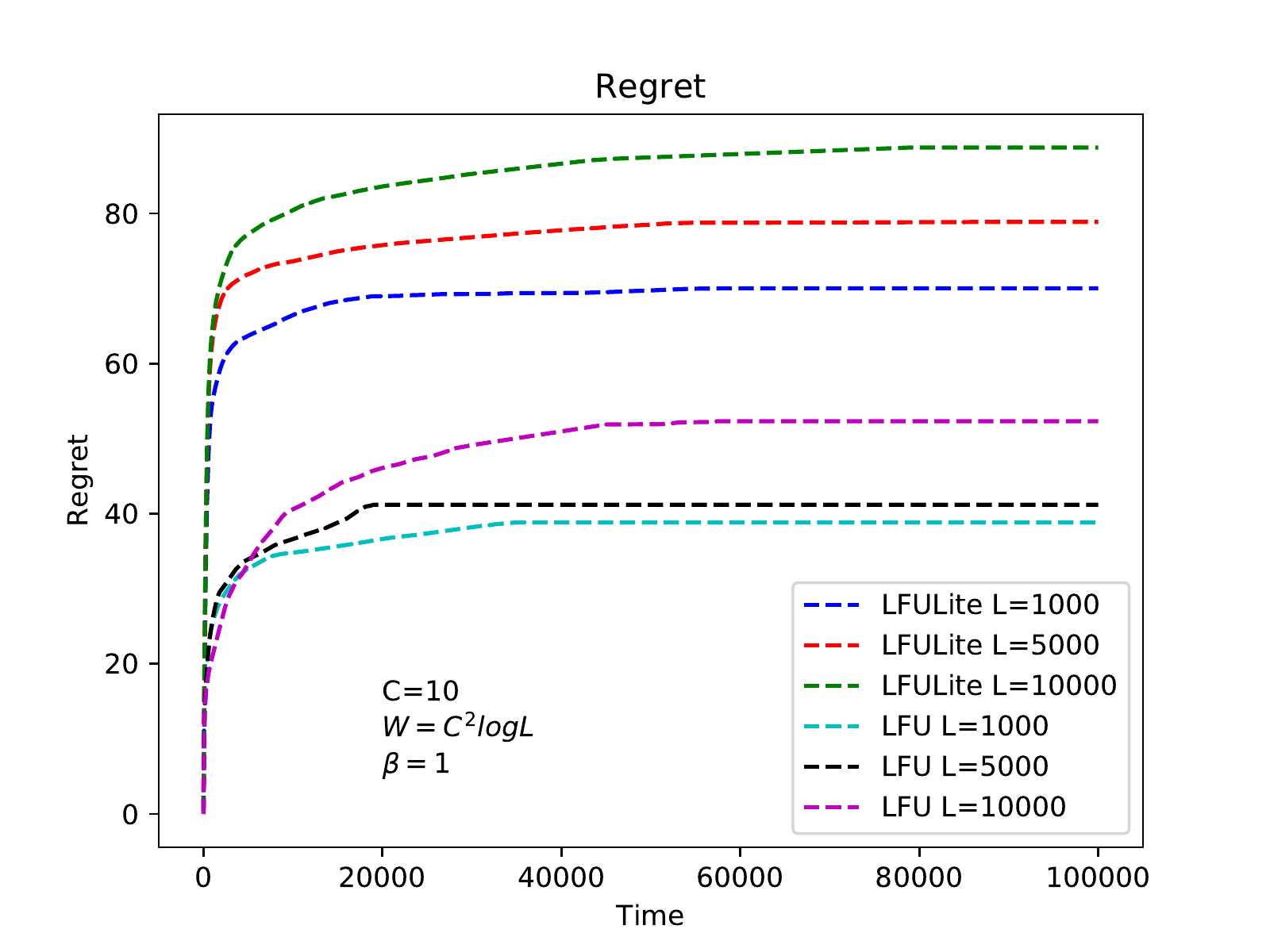}
\caption{Regret of LFU-Lite for varying L}
\label{fig:regret_l}
\end{minipage}\hfill
\begin{minipage}{.32\textwidth}
\centering
\includegraphics[width=1\columnwidth]{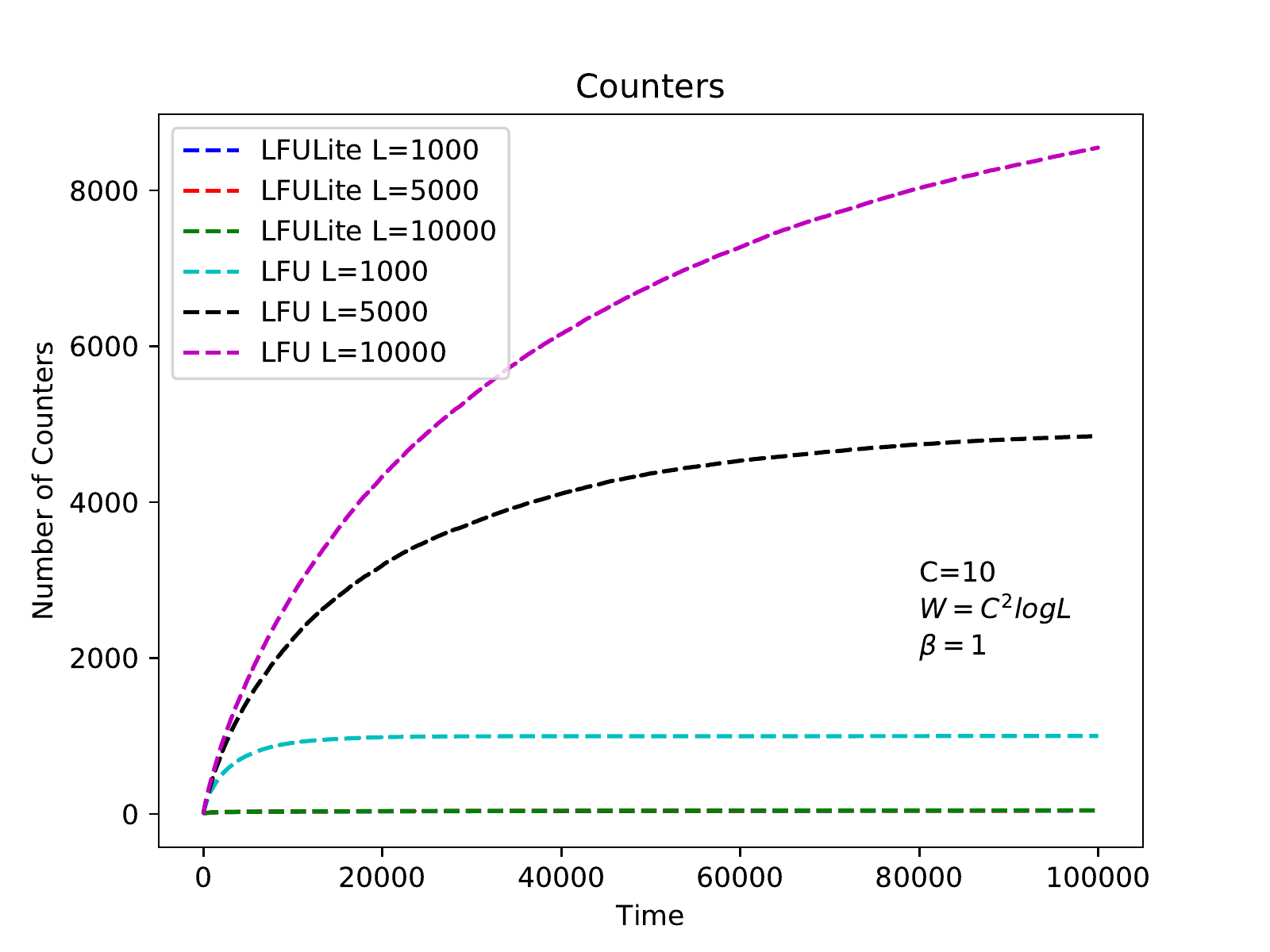}
\caption{Growth of Counters for varying L}
\label{fig:counter_l}
\end{minipage}\hfill

\centering
\begin{minipage}{.32\textwidth}
\centering
\includegraphics[width=1\columnwidth]{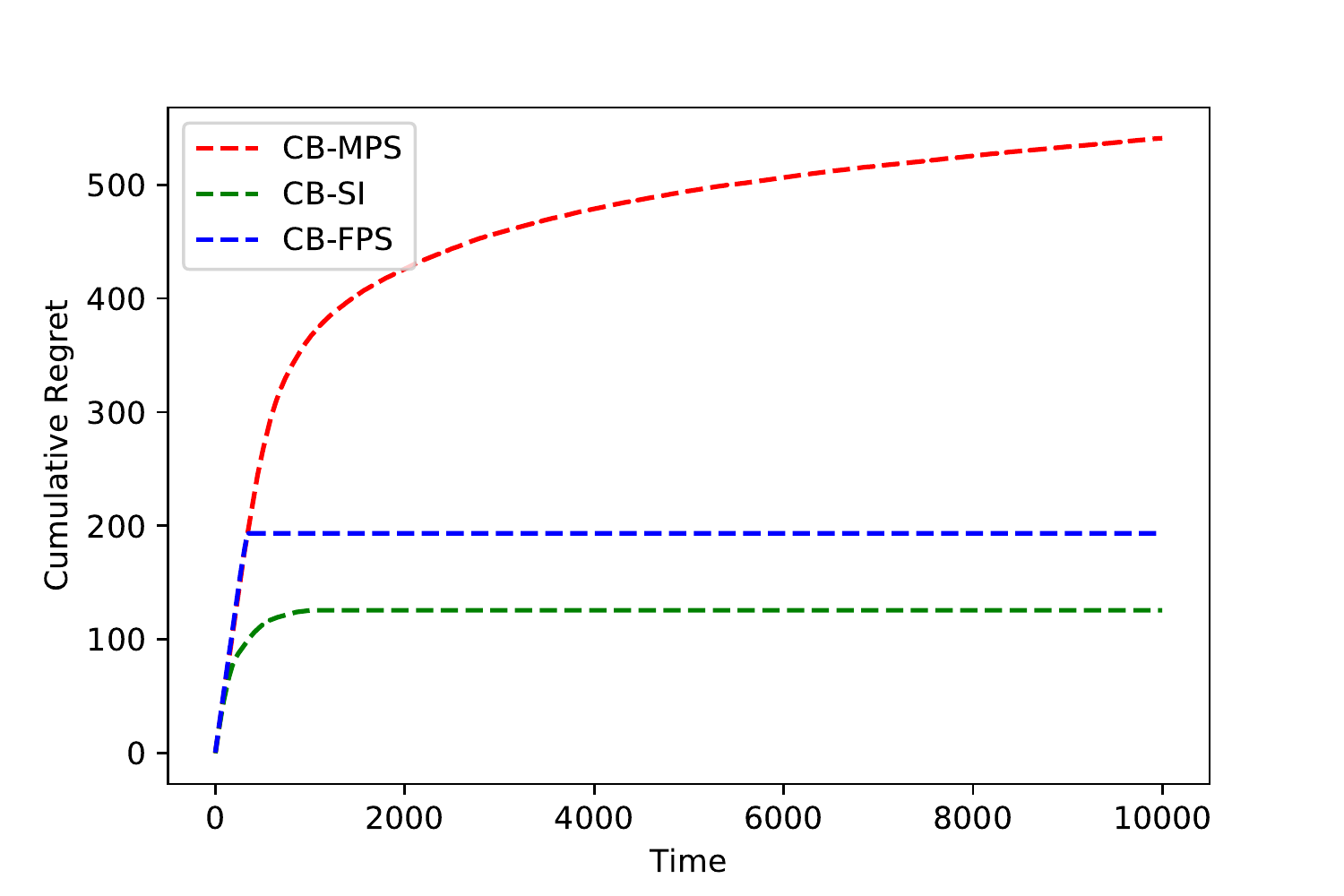}
\caption{Cumulative Regret performance comparison for $L=100,C=1,\beta=2$}
\label{fig:allalgorithms}
\end{minipage}\hfill
\begin{minipage}{.32\textwidth}
\centering
\includegraphics[width=1\columnwidth]{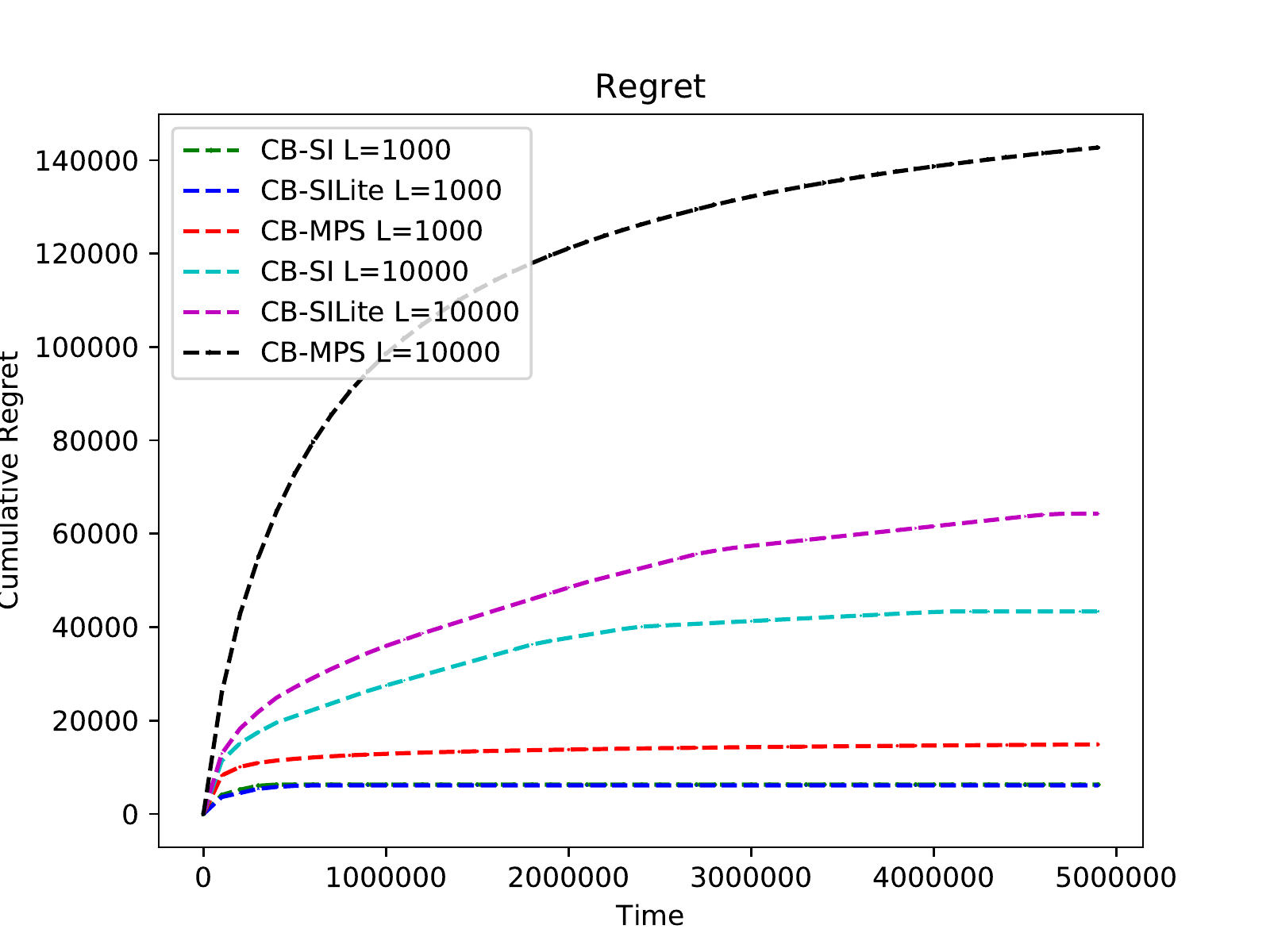}
\caption{Regret of CB-SI, CB-SILite, CB-MPS for $C=10,\beta=1$}
\label{fig:comparison}
\end{minipage}\hfill
\centering
\begin{minipage}{.32\textwidth}
\centering
\includegraphics[width=1\columnwidth]{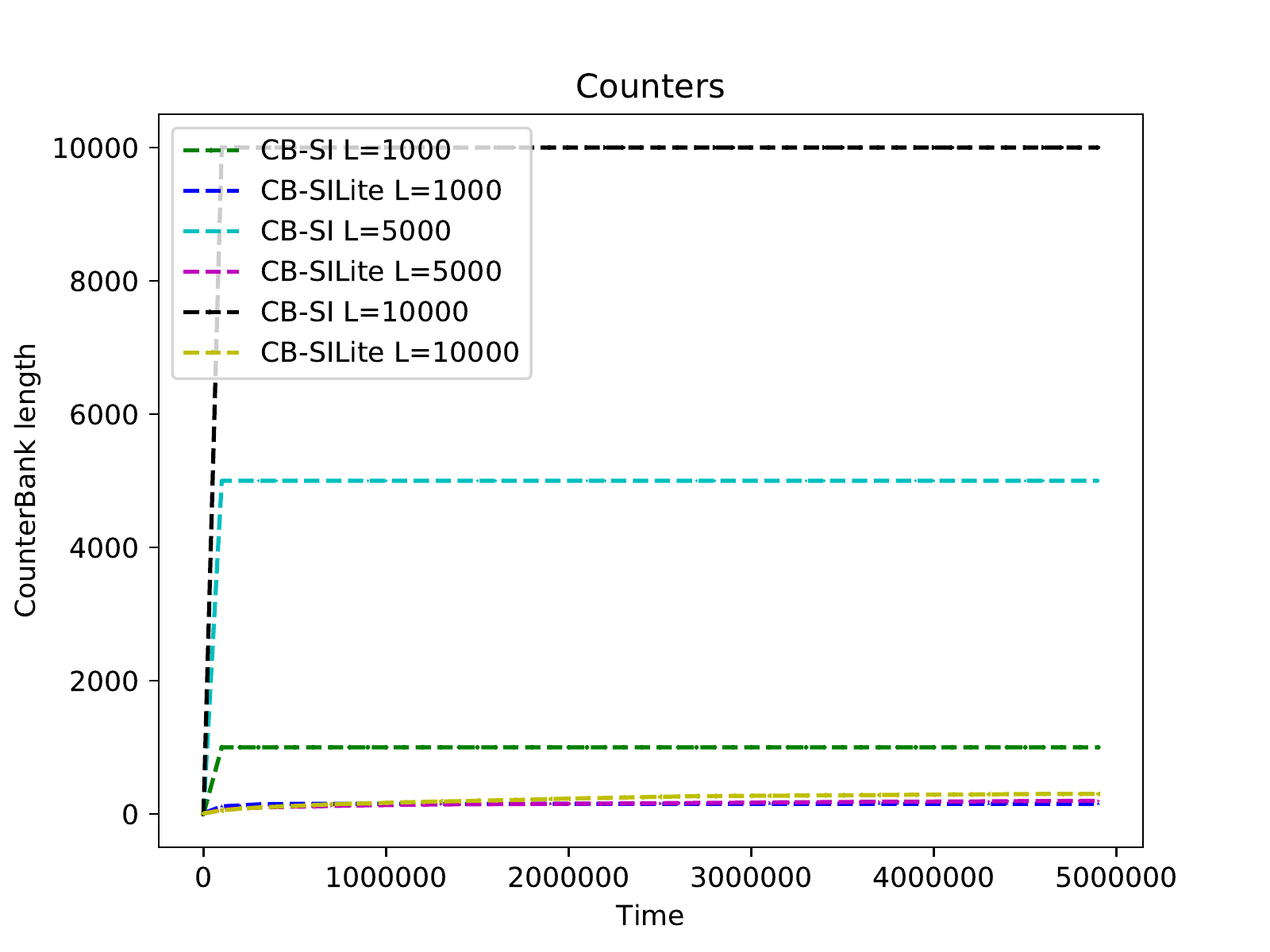}
\caption{Growth of Counters for varying L}
\label{fig:ThresholdLibraryscaling}
\end{minipage}\hfill
\end{figure*}

\begin{figure*}[htbp]
\centering
\begin{minipage}{.32\textwidth}
\centering
\includegraphics[width=1\columnwidth]{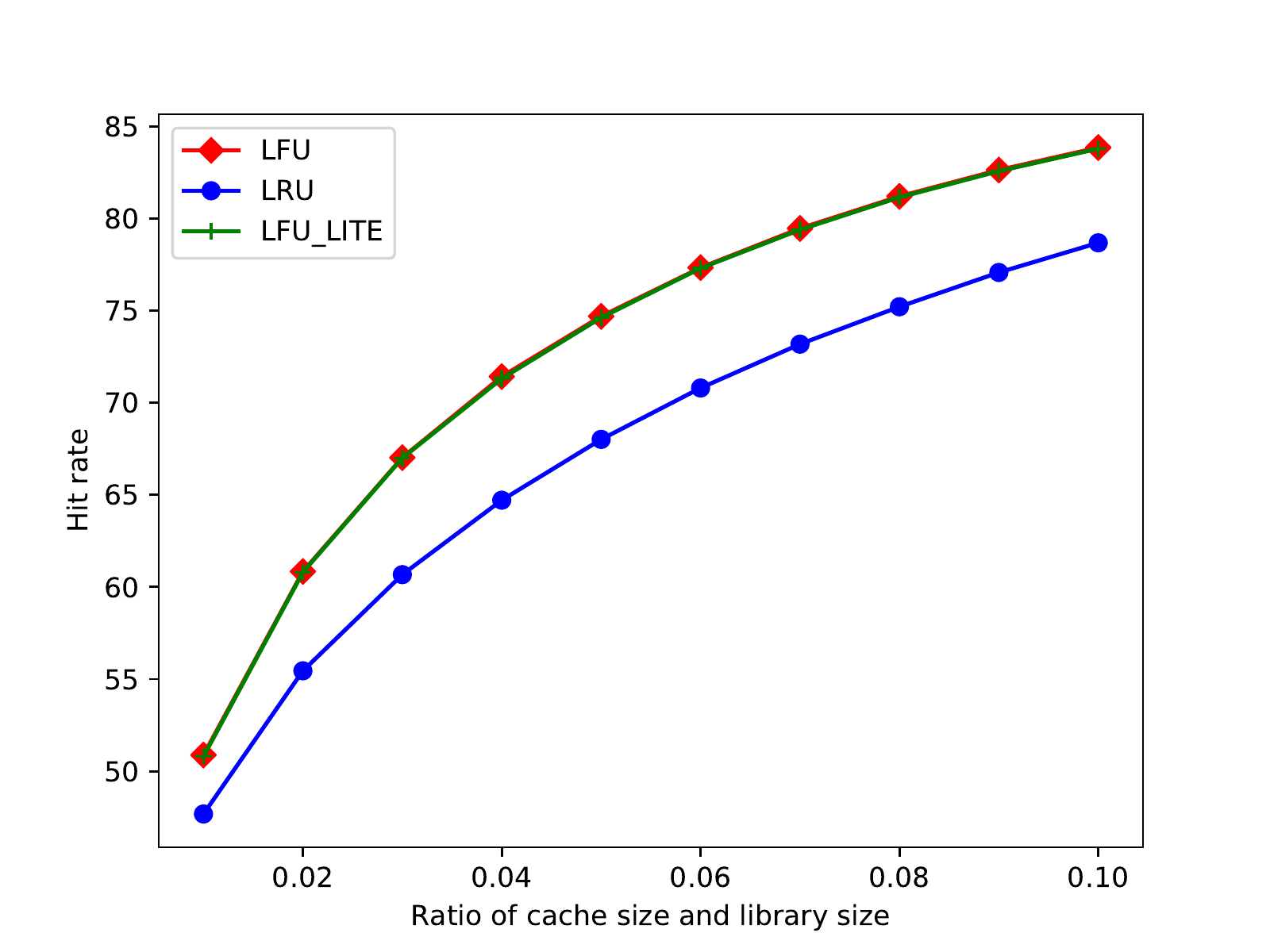}
\caption{Hit rates for Full observation for IBM trace}
\label{fig:hit_ibm}
\end{minipage}\hfill
\begin{minipage}{.32\textwidth}
\centering
\includegraphics[width=1\columnwidth]{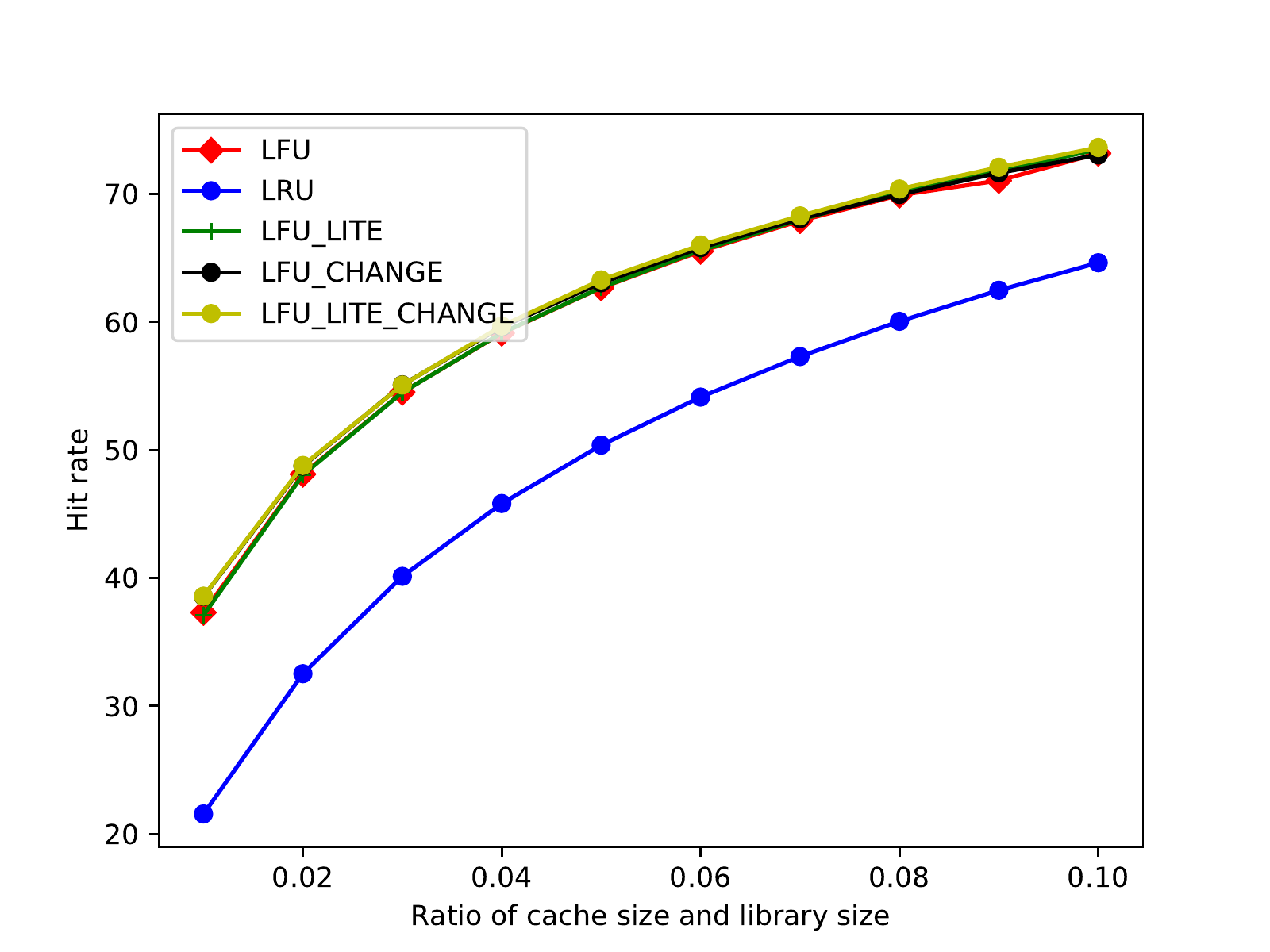}
\caption{Hit rates for Full observation for YouTube trace}
\label{fig:hit_yt}
\end{minipage}\hfill
\begin{minipage}{.32\textwidth}
\centering
\includegraphics[width=1\columnwidth]{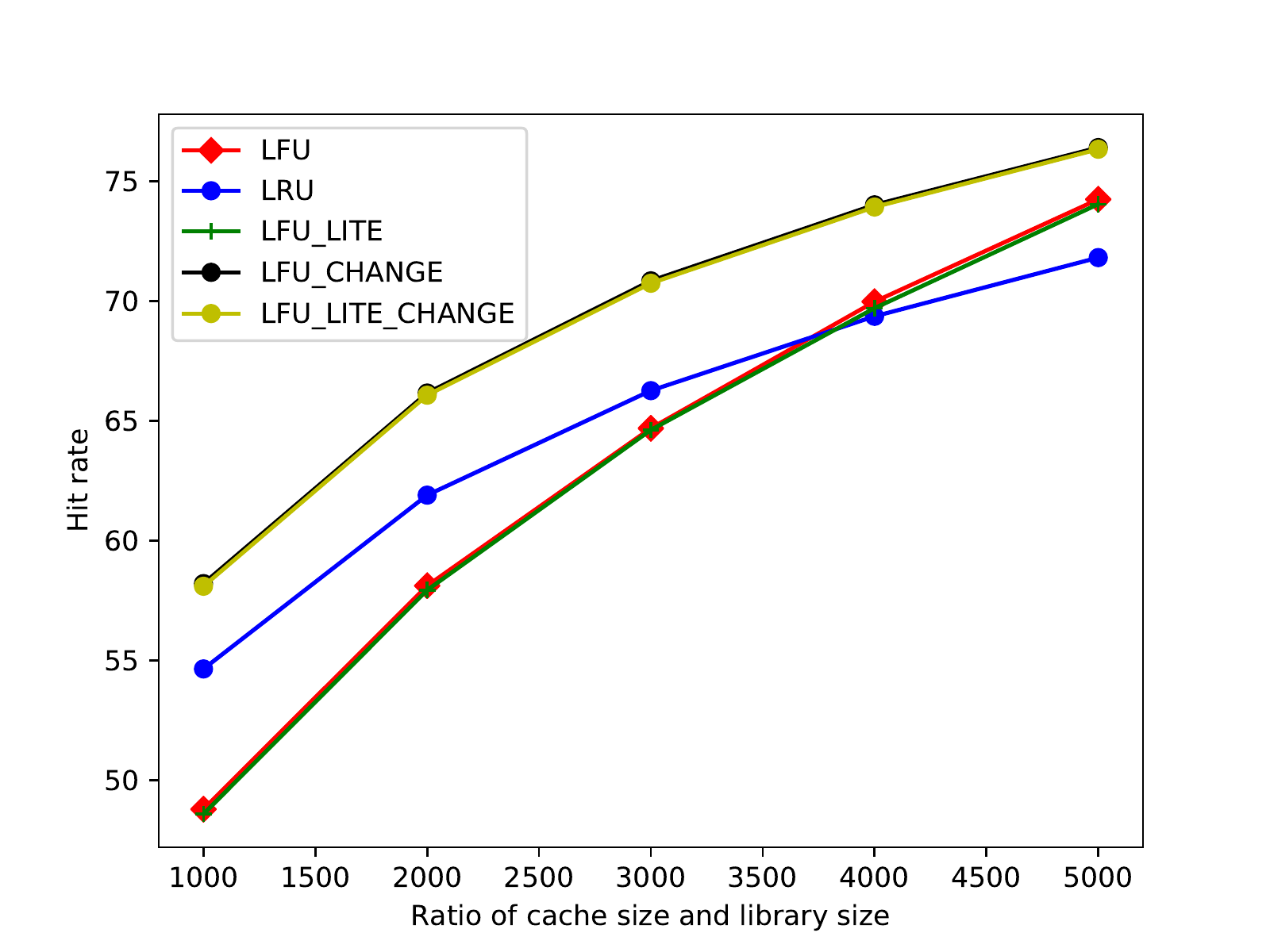}
\caption{Hit rates for Full observation for Change trace}
\label{fig:hit_ch}
\end{minipage}\hfill

\begin{minipage}{.32\textwidth}
\centering
\includegraphics[width=1\columnwidth]{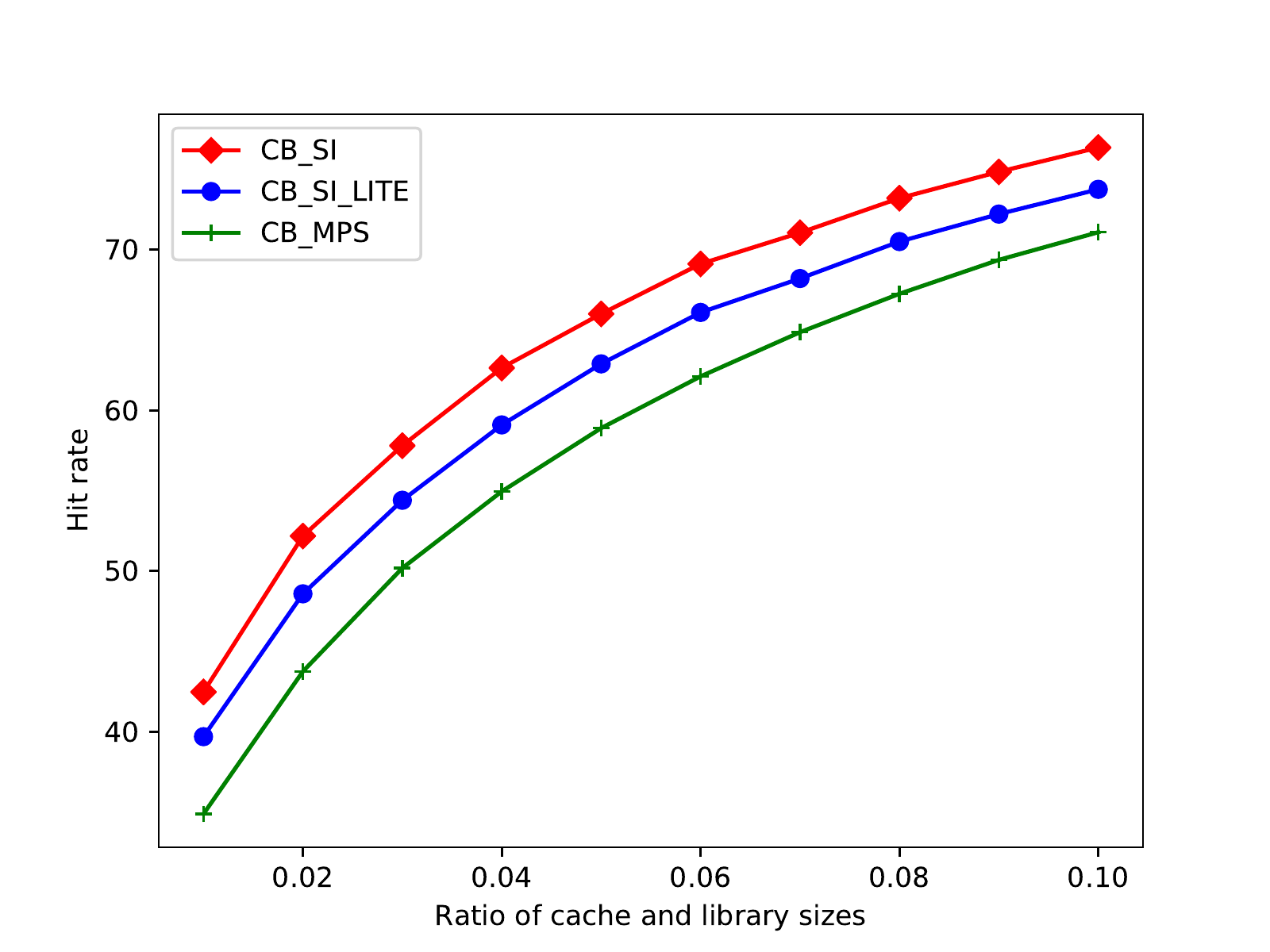}
\caption{Hit rates for partial observation, IBM trace}
\label{fig:hit_ibm_p}
\end{minipage}\hfill
\begin{minipage}{.32\textwidth}
\centering
\includegraphics[width=1\columnwidth]{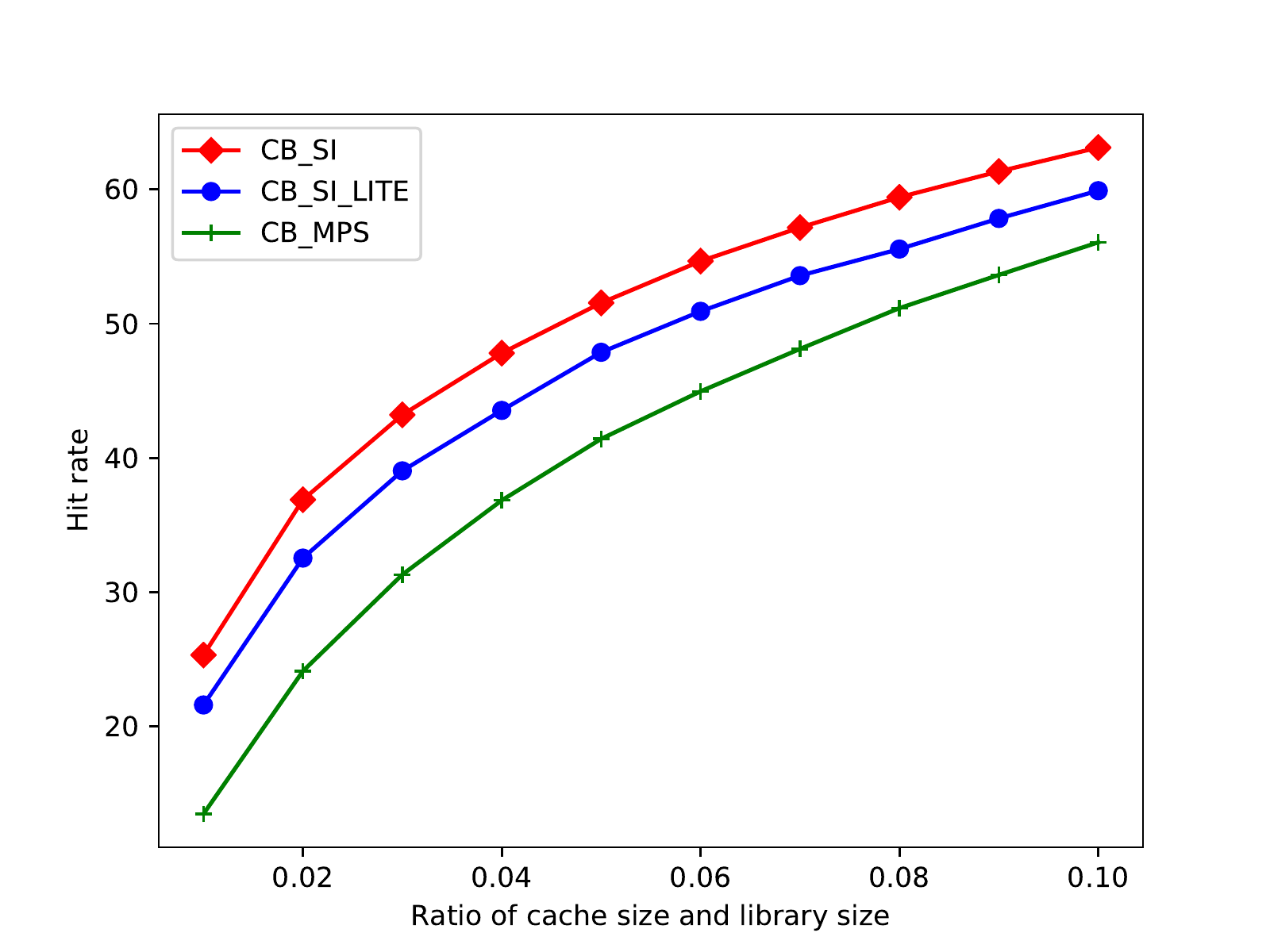}
\caption{Hit rates for Partial observation, YouTube trace}
\label{fig:hit_yt_p}
\end{minipage}\hfill
\begin{minipage}{.32\textwidth}
\centering
\includegraphics[width=1\columnwidth]{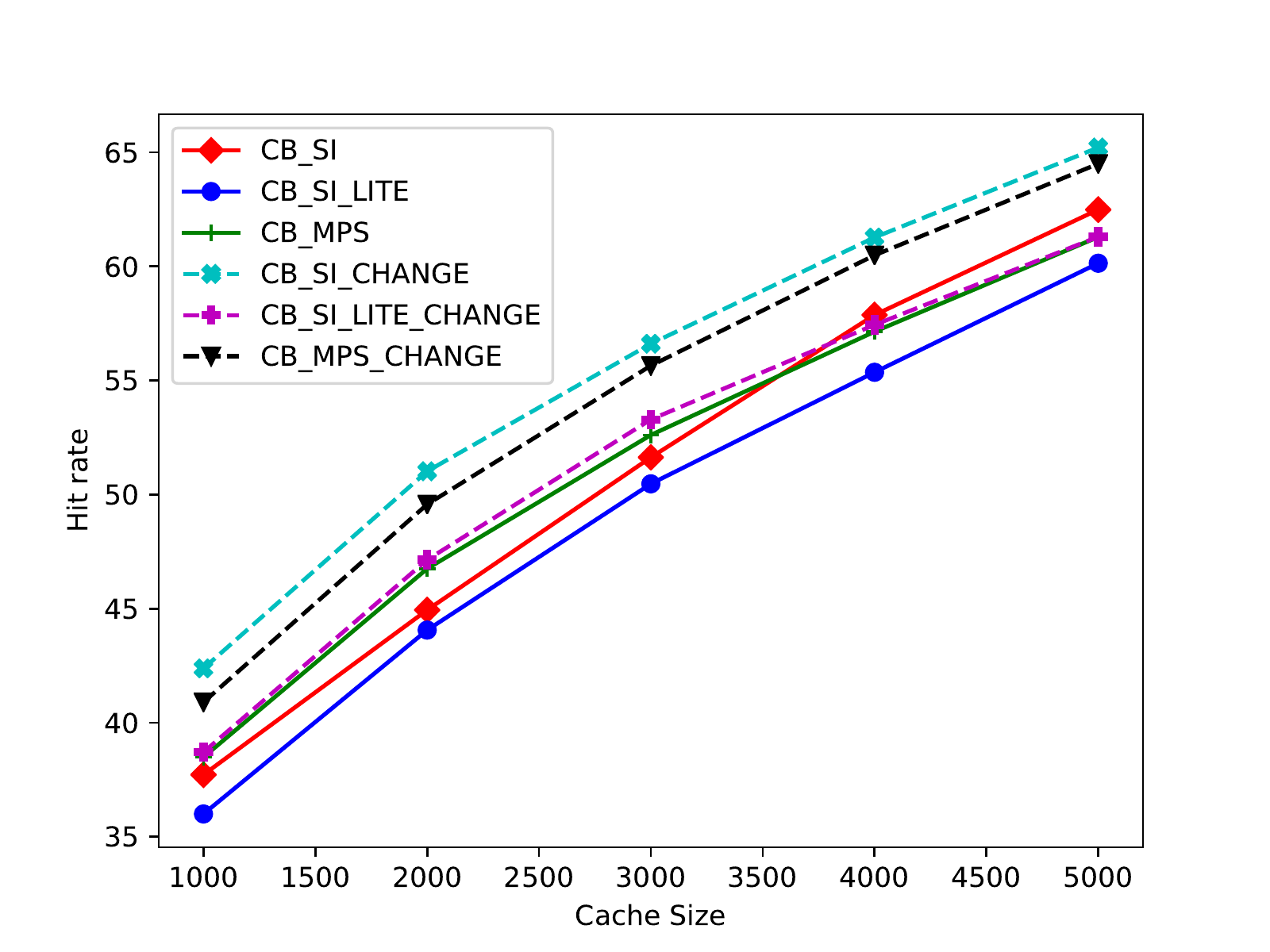}
\caption{Hit rates for partial observation, Change trace}
\label{fig:hit_change_p}
\end{minipage}\hfill

\end{figure*}

\subsection{IRM Simulations}

\subsubsection{Full Observation}
We first conduct simulations for an IRM request process following a Zipf distribution with parameter $\beta$ under the full observation setting.  Figure \ref{fig:regret_compare} compares the regret suffered by LFU, WLFU and LFU-Lite for $C=10, L=1000,W=C^2 \log L \text{ \cite{karakostas2002exploitation}}  \text{ and }\beta=1$ . As expected, the regret suffered by the WLFU algorithm grows linearly with time, while LFU and LFU-Lite suffer a constant regret.  Figure \ref{fig:counter_compare} shows the growth of the number counters used to keep an estimate of files. The merits of LFU-Lite are clearly seen here, as it uses approximately $35$ counters to achieve a constant regret, while $LFU$ uses $1000$.
 
The growth of counters and the regret suffered by LFU-Lite depends on
$W$, the window of observation. In Figures \ref{fig:regret_w} and \ref{fig:counter_w}, we compare the growth of regret and counters with $W$ for $L=1000, C=10, \beta=1$.  We see that the number of counters is essentially unchanged for a wide range of $W,$ indicating a robustness to windowing as long as it is sufficiently large.

The key advantage of the LFU-Lite algorithm is that it suffers a constant regret, while keeping track of fewer items even for large library sizes. This is clearly seen in Figure \ref{fig:regret_l} and \ref{fig:counter_l}. LFU-Lite only keeps track of approximately $45$ items while the LFU algorithm keeps track of almost all items.

\subsubsection{Partial Observation}


%

Figure \ref{fig:allalgorithms} shows the cumulative regret performance for CB-SI, CB-MPS and CB-FPS algorithms. We see that both FPS and SI versions have constant regret, whereas the MPS approach has increasing regret consistent with our analysis.  While we are forced to keep $L$ and $C$ low in Figure \ref{fig:allalgorithms} due to the complexity of the FPS approach, Figure \ref{fig:comparison} shows the cumulative regret performance for $L=1000,10000$, for CB-MPS, CB-SI and CB-SILite algorithms. As expected, the SI approach has constant regret. CB-SILite also suffers constant regret, while being a little worse compared to CB-SI. CB-MPS algorithm has logarithmic regret.

Finally, Figure \ref{fig:ThresholdLibraryscaling} shows the number of counters used by CB-SI and CB-SILite algorithms for $L=1000, 5000, 10000$, with $C=10,\beta = 1$.  The number of counters used by CB-SILite is very less even for large library sizes, compared to CB-SI.

\subsection{Trace-Based Simulations}
\label{subsec:traces}
We next conduct trace-based simulations using real world data.  The description of the traces that we use in this work is given below.
\begin{enumerate}
    \item \textbf{IBM trace:} This trace is obtained from~\cite{zerfos2013platform}. It contains a total of $1$ million requests to an IBM web server for a library of size $43857$.
    \item \textbf{YouTube trace:} This trace is obtained from~\cite{cheng2008statistics}. It contains information about the requests made for $161085$ newly created YouTube videos each week over 20 weeks.   
    From the data, we compute the popularity distribution of the videos for each week, and obtain $50000$ samples from each week's distribution.  IN this manner, we create an access trace in which the request distribution changes over each set of $50000$ requests. We run this trace for $1$ million requests. 
    \item \textbf{Synthetic trace for changing popularity distribution:}  We observe that the content popularities in the real world traces change quite slowly. Our goal is also to understand the impact of non-stationarity in the request arrival process on the hit rate performance of the proposed algorithms. To obtain a reasonable amount of non-stationarity in the popularity of items, we generate a synthetic access trace that changes the popularities periodically. To create this trace, we use a Zipf distribution with parameter $1$ to sample $1$ million requests in the following manner. For every $100000$ requests, we swap the probabilities of top $10000$ items in the access distribution cyclically, in steps of $500$. This approach results in considerable change in the distribution of the request arrivals for the top $10000$ items in the library.
\end{enumerate}
\begin{table}[h]
\begin{tabular}{|c|c|c|c|}
\hline
Cache Size & IBM           & Youtube          & Synthetic \\ \hline
$2\%$      & $15.07 \%$    & $13.25 \%$       & $26.66 \%$         \\ \hline
$4\%$      & $17.5 \% $    & $23.14 \%$       & $38.25 \%$         \\ \hline
$6\%$      & $26.56 \%$    & $30.40 \%$       & $49.42 \%$         \\ \hline
$8\%$      & $29.26 \%$    & $36.33 \%$       & $55.78 \%$         \\ \hline
$10\%$     & $31.84 \%$    & $42.52 \%$       & $62.28 \%$         \\ \hline
\end{tabular}
\caption{Counter Bank size for LFULite, Full Observation}
\label{tab:FullInfoCB}
\end{table}

\begin{table}[h]
\begin{tabular}{|c|c|c|c|}
\hline
Cache Size & IBM           & Youtube          & Synthetic \\ \hline
$2\%$      & $8.39 \%$    & $8.5 \%$       & $8.4 \%$         \\ \hline
$4\%$      & $13.24 \% $    & $13.6 \%$       & $14.09 \%$         \\ \hline
$6\%$      & $16.51 \%$    & $17.6 \%$       & $18.22 \%$         \\ \hline
$8\%$      & $19.39 \%$    & $21.23 \%$       & $22.12 \%$         \\ \hline
$10\%$     & $22.21 \%$    & $23.85 \%$       & $25.88 \%$         \\ \hline
\end{tabular}
\caption{Counter Bank size for CB-SILite, Partial Observation}
\label{tab:PartialInfoCB}
\end{table}
\subsubsection{Full Observation}
We compare the hit rates of LRU, LFU, and LFULite algorithms on the three traces.  The size of the window for LFULite is chosen  $O(C \log L),$ following the suggestions in \cite{karakostas2002exploitation}.  Figure \ref{fig:hit_ibm} shows the hit rates of LRU, LFU and LFULite on the IBM trace. We observe that LFU and LFULite outperform  LRU. Moreover, LFULite gives the same performance as LFU, while using only a fraction of counters. (Table \ref{tab:FullInfoCB}) 

For the YouTube trace (Figure \ref{fig:hit_yt}), in addition to the three algorithms, we implement heuristic versions of LFU and LFULite that account for the change in distribution. We halve the counts of LFU and LFULite every $50000$ requests. We observe that LFU and LFULite outperform  LRU, while the change versions do slightly better. The small performance gain in LFUCHANGE and LFULiteCHANGE is due to the slowly varying popularities in the YouTube trace. 

Next, we show the performance of all the five algorithms on the synthetic change trace (Figure \ref{fig:hit_ch}). We  observe that LRU dominates LFU and LFULite for small cache sizes, while LFU and LFULite outperforms LRU as the cache size grows. We also observe that the heuristic versions of LFU and LFULite outperforms LRU for all cache sizes.

\subsubsection{Partial Observation}
Figure \ref{fig:hit_ibm_p} shows the hit rates for CB-SI,CB-SILite, and CB-MPS algorithms for the IBM trace. We observe that the CB-SI algorithm clearly outperforms CB-SILite and CB-MPS algorithms. Figure \ref{fig:hit_yt_p} shows the hit rate performance of these algorithms for the YouTube trace. Even here, the performances of CB-SI and CB-SILite are superior to the CB-MPS algorithm. 

For the synthetic change trace, we also implement change versions of all the three algorithms by halving the counts periodically every $50000$ requests. We notice that CB-SIChange outperforms all the other algorithms.  We observe that CB-SILite uses small counter bank for all the traces as shown in see Table \ref{tab:PartialInfoCB}.

\section{Conclusion}

We considered the question of caching algorithm design and analysis from the perspective of online learning.  We focused on algorithms that estimate popularity by maintaining counts of requests seen, in both the full and partial observation regimes.  Our main findings were in the context of full observation, it is possible to follow this approach and obtain $O(1)$ regret using the simple LFU-Lite approach that only needs a small number of counters.  In the context of partial observations, our finding using the CB-SI approach was that structure greatly enhances the learning ability of the caching algorithm, and is able to make up for incomplete observations to yield $O(1)$ regret.  We verified these insights using both simulations and data traces.  In particular, we showed that even if the request distribution changes with time, our approach (enhanced with a simple "forgetting" rule) is able to outperform established algorithms such as LRU.

\bibliographystyle{IEEEtran} 
\bibliography{refs}

\end{document}